\newif\ifdouble
\theoremstyle{plain}
\newtheorem{theorem}{Theorem}
\newtheorem{proposition}{Proposition}
\newtheorem{corollary}{Corollary}
\theoremstyle{definition}
\newtheorem{definition}{Definition}
\theoremstyle{remark}
\newtheorem{remark}{Remark}
\begin{document}

\title{Secure Multi-Source Multicast
\thanks{A.Cohen, A,Cohen and O. Gurewitz are with the Department of Communication Systems Engineering, Ben-Gurion University of the Negev, Beer-Sheva 84105, Israel (e-mail: alejandr@post.bgu.ac.il; coasaf@post.bgu.ac.il; gurewitz@post.bgu.ac.il).
M.M\'{e}dard is with the Laboratory for Information and Decision Systems at the Massachusetts Institute of Technology (medard@mit.edu). This research was partially supported by the Israeli MOITAL NEPTUN consortium and by the European Union Horizon 2020 Research and Innovation Programme SUPERFLUIDITY under Grant 671566. Parts of this work appeared at the IEEE International Symposium on Information Theory (ISIT), 2017.}
}
\markboth{}{}
\ifdouble
\author{\IEEEauthorblockN{Alejandro Cohen\hspace{15 mm} Asaf Cohen \hspace{15 mm} Muriel M\'{e}dard \hspace{15 mm} Omer Gurewitz}\\ \hspace{1mm} BGU \hspace{28mm} BGU \hspace{29mm} MIT \hspace{31mm} BGU \vspace{-8mm}}
\else
\author{\IEEEauthorblockN{Alejandro Cohen\hspace{12 mm} Asaf Cohen \hspace{12 mm} Muriel M\'{e}dard \hspace{12 mm} Omer Gurewitz}\\ \hspace{1mm} BGU \hspace{28mm} BGU \hspace{29mm} MIT \hspace{31mm} BGU \vspace{-8mm}}
\fi
\maketitle
\begin{abstract}
The principal mission of \emph{Multi-Source Multicast} (MSM) is to disseminate all messages from all sources in a network to all destinations. MSM is utilized in numerous applications. In many of them, securing the messages disseminated is critical.

A common secure model is to consider a network where there is an eavesdropper which is able to observe a subset of the network links, and seek a code which keeps the eavesdropper ignorant regarding \emph{all the messages}. While this is solved when all messages are located at a single source, \emph{Secure MSM} (SMSM) is an open problem, and the rates required are hard to characterize in general.

In this paper, we consider \emph{Individual Security}, which promises that the eavesdropper has zero mutual information with \emph{each message individually}, or, more generally, with \emph{sub sets of messages}. We completely characterize the rate region for SMSM under individual security, and show that such a security level is achievable at the full capacity of the network, that is, the cut-set bound is the matching converse, similar to \emph{non-secure} MSM.
Moreover, we show that the field size is similar to non-secure MSM and does not have to be larger due to the security constraint.
\end{abstract}

\section{Introduction}\label{intro}
Linear Network Coding (LNC) \cite{li2003linear} and Random Linear Network Coding (RLNC) \cite{ho2006random} are essential for efficient utilization of network resources. With network coding, \emph{multiple sources} can multicast information to all destinations simultaneously, at rates up to the min-cut between the sources and the destinations. \Cref{fig:wiretap_Secure gossip} depicts a simple example: the min-cut from any source to any destination is 2, and from both sources to any destination is 4, hence one can disseminate \emph{2 messages from each source to all destinations}. However, in many practical multicast applications, it is important to ensure privacy is not compromised if an eavesdropper (Eve) is present in the network. Indeed, the theory of secure network coding is vast. We include here only the most relevant works.

When the sources are co-located at a single node, several secure network coding solutions were suggested \cite{cai2002secure,chan2008capacity,cai2011secure,el2007wiretap,silva2008security,el2012secure}. Such solutions guarantee the mutual information between Eve's data, $\textbf{Z}$, and all the messages is 0. For example, returning to \Cref{fig:wiretap_Secure gossip}, if only source $s_1$ had messages to send, and Eve would be able to wiretap one link in the network, then secure network coding would guarantee secure dissemination of one message from the source to all destinations. This is a reduction in rate compared to the full capacity, as the min-cut from $s_1$ to any destination is 2. However, when requiring zero mutual information with all messages from the source, this rate reduction is essential, and matches the converse result.
\ifdouble
\begin{figure}
\centering
\includegraphics[trim=0cm 0.0cm 0cm 0cm,clip,scale=0.85]{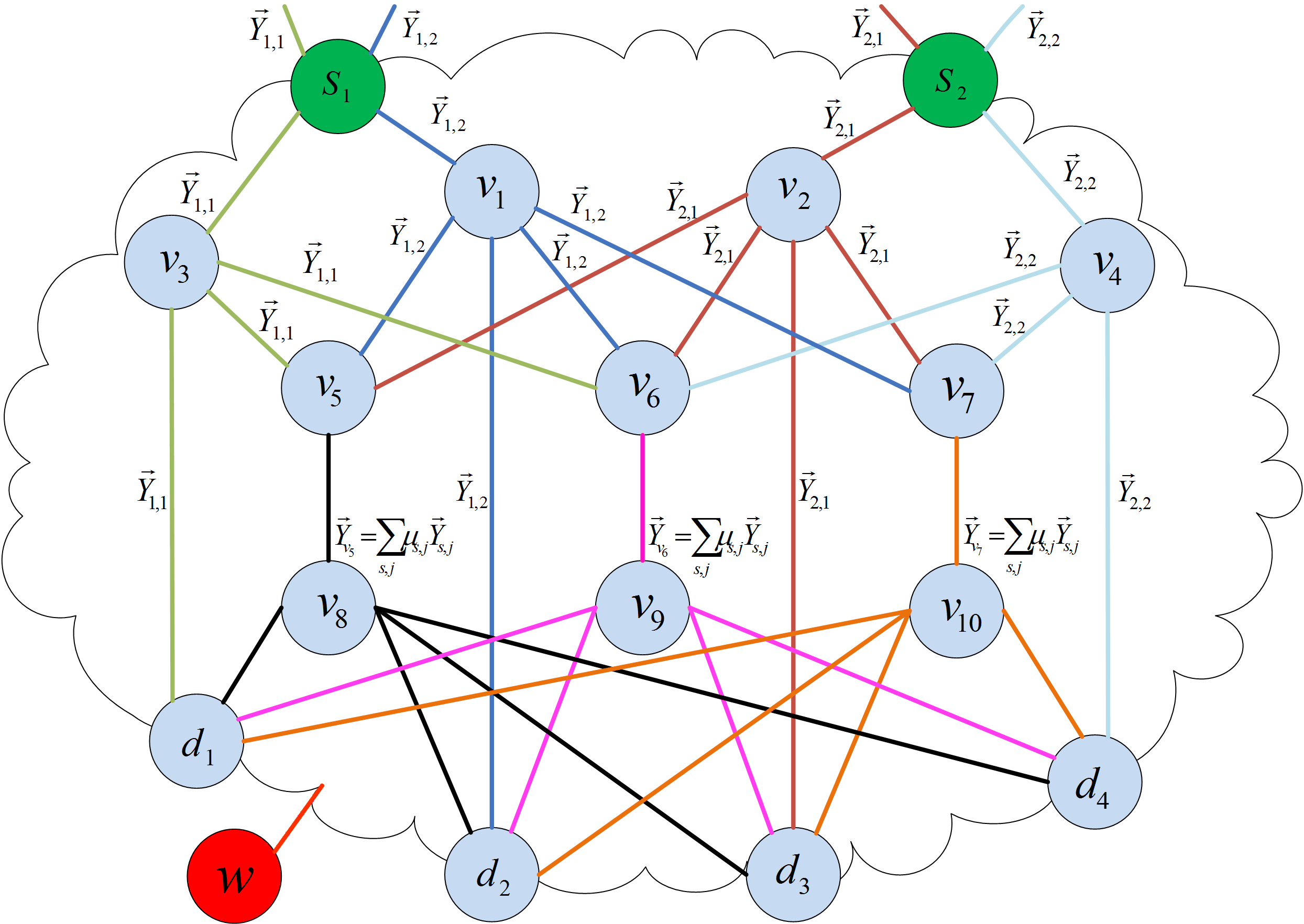}
\caption{Secure multi-source multicast with LNC, for two sources $s_i$, with two messages each and four legitimate destination nodes $d_i$. The eavesdropper min-cut is at most $1$. The edges in the graph point downward.}
\label{fig:wiretap_Secure gossip}
\end{figure}
\else
\begin{figure}
\centering
\includegraphics[trim=0cm 0.0cm 0cm 0cm,clip,scale=1.10]{ex1_2.png}
\caption{Secure multi-source multicast with LNC, for two sources $s_i$, with two messages each and four legitimate destination nodes $d_i$. The eavesdropper min-cut is at most $1$. The edges in the graph point downward.}
\label{fig:wiretap_Secure gossip}
\end{figure}
\fi

When the network includes multiple sources which are not co-located, the problem is more involved. Clearly, applying a single-source, secure network coding solution at each source would give an achievable scheme. In the example, if Eve wiretaps one link, one can clearly multicast one message \emph{from each source, to all destinations}. This solution, however, may be wasteful, as it is half of the full capacity of the network, ``wasting" one message \emph{per source}, although Eve may capture only a single link regardless of the number of sources. Indeed, there is no matching converse result for the above solution.

In \cite{cai2007security,zhang2009general}, the authors gave a necessary and sufficient condition for Secure Multi-Source Multicast (SMSM). However, it is a condition on \emph{ranks of matrices} having the global encoding vectors as columns, and, unlike non-secure MSM or secure single-source multicast, it does not translate directly to \emph{rate or min-cut constraints}. Thus, the problem of determining the rate region in SMSM is an open problem in general \cite{cai2009valuable}, and as mentioned in \cite[Section VI]{cai2011theory}, seeking models for which it is solvable is important.
In \cite{chan2014network}, the authors characterized the network coding capacity of several models, including SMSM, via the entropic region $\Gamma^*$. Yet, to date, this region is not fully characterized.

\subsection*{Main Contribution}
In this paper, we consider SMSM under an \emph{Individual Security} constraint.
In this model, the eavesdropper is kept ignorant, in the sense of having zero mutual information, regarding each message separately (or, more generally, regarding sub sets of messages), yet may potentially obtain \emph{insignificant} information about mixtures of packets transmitted. Such a security model was recently used in various canonical problems, e.g., wiretap channels \cite{kobayashi2013secure}, more general broadcast channels \cite{mansour2014secrecy,chen2015individual,mansour2015individual,mansourindividual} and multiple-access channels \cite{goldenbaum2015multiple,chensecure}, and, although not specifically mentioned as such, is also related to weakly secure network coding \cite{bhattad2005weakly} and the notion of \emph{algebraic security} \cite{lima2007random,claridge2017probability}, which consider the information in linear combinations of messages. Moreover, a related single-source problem is that of distributed storage \cite{kadhe2014weaklyNetCod,kadhe2014weakly,paunkoska2016improved}, which we also address.

We completely characterize the rate region for individually secure MSM. Specifically, we show that secure communication is achievable up to the min-cut, that is, without any decrease in the rate or any message ``blow-up" by extra randomness. In fact, due to the individual security constraint, messages protect one another, and in the context of \Cref{fig:wiretap_Secure gossip}, one is able to send \emph{two messages from each source securely, although Eve may observe any single link}. In that sense, we non-trivially extend the single-source multicast results of \cite{bhattad2005weakly} and \cite{silva2009universal} to multi-source multicast, giving both linear codes as well as non-linear codes over a small field size.

We then turn to a few applications where the suggested coding scheme can be useful. Specifically, we consider data centers, wireless networks and live broadcasting of video using multi-path streaming, and show how the individual security coding schemes suggested in this paper are applicable, achieving the full capacity of those systems.
Finally, we show that the coding scheme is applicable to algebraic gossip as well \cite{deb2006algebraic}, resulting in \emph{secure gossip} without extra rounds. For example, consider the "Random Phone Call" model. This model was introduced in \cite{demers1987epidemic} as special case of uniform gossip. In each round of communication, every participant may "call" a random participant, and send one unit of information. The goal is, naturally, to disseminate messages from the source to \emph{all} participants. Rigorously, the underlying graph is complete and unweighted. A detailed analysis of this model is given in \cite{deb2005random,haeupler2011analyzing}. It was shown that in a random phone call model with $v$ nodes, the flooding time is $\Theta(\log v)$, with constant throughput. Of course, this is without any secrecy constraint. Any phone call which Eve listens to contains relevant information, and results in leakage. Using the code suggested in this paper, we will show that one can design a secure gossip scheme, which makes sure that as long as Eve does not listen to too many calls, she remains complectly ignorant regarding any specific message, and all this without any loss in throughput or number of rounds.

The structure of this paper is as follows. In \Cref{formulationMultiplex}, a SMSM model is formally described. \Cref{main results} includes our main results, with the individually-SMSM achievability proved in \Cref{SecureGossipAlgorithmMultiplex} and converse proved in \Cref{converse_sec}. \Cref{LinearCodes} includes a linear code construction for the individually-SMSM model. \Cref{strong-SMSM} describes a Strongly-SMSM algorithm and proves a direct result for it. In \Cref{applications}, we show a few important examples, for which the individual security coding is applicable. \Cref{conc} concludes the paper.

\section{Model and Problem Formulation}\label{formulationMultiplex}
SMSM is specified by a graph $\mathcal{G}=(\mathcal{V},\mathcal{E})$, where $\mathcal{V}$ and $\mathcal{E}$ are the node set and the edge set, respectively. We assume noise-free links of unit capacity. This capacity can be thought of as one "packet" of $c$ bits, plus some negligible overhead.\footnote{\label{note1}As in most LNC solutions, a header is required for each message. Thus, we assume messages of length $c$, large enough to make the overhead in the header negligible.}

The node set $\mathcal{V}$ contains a subset of source nodes $\mathcal{S} = \{S_1,\ldots S_{|S|}\}$ and a subset of legitimate destination nodes $\mathcal{D}= \{D_1,\ldots D_{|D|}\}$. Each of the sources has its own set of $k$ independent and uniformly distributed messages of length $c$ each, over the binary field. We denote them by a messages matrix
\begin{equation*}
   \textbf{M}_s=[\vec{M}_{s,1};\vec{M}_{s,2};\ldots; \vec{M}_{s,k}] \in \{0,1\}^{k\times c},
\end{equation*}
where each row corresponds to a separate message $\vec{M}_{s,j}$, $j\in\{1,\ldots,k\}$. Note that both the independence of the messages, as well as their uniform distribution are critical to achieve secrecy. These assumption are, indeed, common in the related literature as well \cite{chensecure,silva2009universal,cai2011theory}.

We assume an eavesdropper which can obtain a subset of $w$ packets traversing the network. Specifically, we define the eavesdropper matrix as
\begin{equation*}
     \textbf{Z}_{w}=[Z_{1}^{c};Z_{2}^{c};\ldots; Z_{w}^{c}] \in \{0,1\}^{w\times c}.
\end{equation*}

We denote the values of min-cuts in the network by $\rho(.;.)$.
For example, for $s_1\in\mathcal{S}$ and $d_1\in\mathcal{D}$, $\rho(s_1;d_1)$ represents the value of the min-cut from source node $s_1$ to legitimate node $d_1$. $\rho(s_1;z)$ represents the value of the min-cut from source node $s_1$ to the eavesdropper (assuming $z$ is a virtual node with infinite capacity from the $w$ edges observed by Eve) and $\rho(\mathcal{S};d_1)$ represents the value of the min-cut from all the source nodes to legitimate node $d_1$.

The goal is to design secure multi-source multicast coding scheme where legitimate nodes send their available messages in order to disseminate all the messages to all the legitimate destination nodes, yet, observing $w$ packets from the communication between legitimate nodes, the eavesdropper is ignorant regarding the messages.
\begin{definition}\label{constraints}
An MSM algorithm with parameters $k$ and $w$ is \emph{Reliable} and \emph{Individually} or \emph{Strongly} secure if:\\
(1) Reliable: At the legitimate destination node $d\in\mathcal{D}$, letting $\textbf{Y}_d$ denote the message matrix obtained, for any set of messages $\textbf{M}_s, s\in \mathcal{S}$, we have
\begin{equation*}
     P(\hat{\textbf{M}}_s(\textbf{Y}_d) \ne \textbf{M}_s) \leq \epsilon,
\end{equation*}
where $\hat{\textbf{M}}_s(\textbf{Y}_d)$ is the estimation of messages $\textbf{M}_s$ at $d$.

\noindent (2) Individually secure: At the eavesdropper, observing $w$ packets, we have
\begin{equation*}
     H(M_{s,j}|\textbf{Z}_{w})=H(M_{s,j}),
\end{equation*}
for all $j \in \{1,\ldots,k\}$ and for all $s\in\mathcal{S}$.

\noindent (3) Strongly secure: At the eavesdropper, observing $w$ packets, for all $s\in\mathcal{S}$ we have
\begin{equation*}
  H(\textbf{M}_{s}|\textbf{Z}_{w})=H(\textbf{M}_{s}).
\end{equation*}
\end{definition}
\begin{remark}
The individual-secrecy constraint given in \Cref{constraints}.2 does not promise perfect, strong-secrecy \cite{cai2011secure,cai2007security,el2012secure}, which is, having the mutual information with all messages negligible. Individual-secrecy ensures secrecy only on each message $M_{s,j}$ separately. The eavesdropper, observing $\textbf{Z}_{w}$, may obtain some information on the combination of $k$ messages since the messages are not independent given $\textbf{Z}_{w}$. However, since the $k$ original messages are mutually independent, the leaked information has no meaning \cite{kobayashi2013secure, mansour2014secrecy,chen2015individual,mansour2015individual,mansourindividual, goldenbaum2015multiple,chensecure, mansour2015individual1}.
In other words, if the messages are independent, we have
\ifdouble
\begin{align*}
I(M_{s,k};\textbf{Z}_{w} & | M_{s,1},\ldots,M_{s,k-1}) \\
&= H(M_{s,k}|M_{s,1},\ldots,M_{s,k-1}) \\
& \qquad -H(M_{s,k}|\textbf{Z}_{w},M_{s,1},\ldots,M_{s,k-1}) \\
&=H(M_{s,k})-H(M_{s,k}|\textbf{Z}_{w},M_{s,1},\ldots,M_{s,k-1}) \\
&\ge H(M_{s,k})-H(M_{s,k}|\textbf{Z}_{w}) \\
&= I(M_{s,k};\textbf{Z}_{w}).
\end{align*}
\else
\begin{eqnarray*}
I(M_{s,k};\textbf{Z}_{w}|M_{s,1},\ldots,M_{s,k-1}) &=& H(M_{s,k}|M_{s,1},\ldots,M_{s,k-1})-H(M_{s,k}|\textbf{Z}_{w},M_{s,1},\ldots,M_{s,k-1}) \\
&=&H(M_{s,k})-H(M_{s,k}|\textbf{Z}_{w},M_{s,1},\ldots,M_{s,k-1}) \\
& \ge &H(M_{s,k})-H(M_{s,k}|\textbf{Z}_{w}) \\
&=& I(M_{s,k};\textbf{Z}_{w}).
\end{eqnarray*}
\fi
Hence,
\ifdouble
\begin{multline*}
 I(\textbf{M}_{s};\textbf{Z}_{w}) = \sum_{k}I(M_{s,k};\textbf{Z}_{w}|M_{s,1},\ldots,M_{s,k-1})
\\
\geq \sum_{k}I(M_{s,k};\textbf{Z}_{w}).
\end{multline*}
\else
\[
 I(\textbf{M}_{s};\textbf{Z}_{w}) = \sum_{k}I(M_{s,k};\textbf{Z}_{w}|M_{s,1},\ldots,M_{s,k-1})\geq \sum_{k}I(M_{s,k};\textbf{Z}_{w}).
\]
\fi
We require that the r.h.s will be small, however, this does not guarantee that the l.h.s is small. If the eavesdropper receives message $M_{s,j}$ by any other manner than the Individual-SMSM transmissions, Eve may obtain some information on other messages $M_{s,i}, i \neq j$, from $M_{s,j}$ and $\textbf{Z}_{w}$. If it is required to prevent the possibility of such an attack, one can get perfect secrecy using  \Cref{constraints}.3, yet at the price of a lower rate, as given in \Cref{strong-SMSM}.
\end{remark}
\begin{remark}
For multicast problems and LNC, the condition in $(1)$ can be used with $\epsilon=0$ \cite{li2003linear,ho2006random}. Yet, we allow a small error to cope with protocols such as randomized gossip \cite{deb2006algebraic,cohen2015network}, which we discuss later in this paper.
\end{remark}
\begin{remark}
The first code construction we consider, given in \Cref{SecureGossipAlgorithmMultiplex}, is based on random coding. Therefore, in that case, the individual secrecy constraint will hold only asymptotically, that is,
\[
H(M_{s,j}|\textbf{Z}_{w})/H(M_{s,j})\rightarrow 1
\]
as $k$ grows. Then, in \Cref{LinearCodes}, we suggest a structured linear code, which results in zero mutual information, such that there is no requirement for $k$ to grow.
\end{remark}

\subsection{Source and Network Coding}
We assume a source $s\in \mathcal{S}$ may use an encoder,
\begin{equation*}
   f : \mathcal{M}_s \rightarrow \mathcal{X}_s\in\{0,1\}^{n\times c},
\end{equation*}
which maps each message matrix $\textbf{M}_s$ to a matrix $\textbf{X}_s$ of codewords. When using a strong security constraints, e.g., \cite{cai2011secure,el2012secure}, $n>k$ and this represents a message ``blow-up" using a random key, used to confuse Eve. However, the main contribution herein, is that \emph{under individual-secrecy, $n=k$ suffices, and there will be no rate loss due to the secrecy constraint}.

Then, the source packets $\vec{Y}$ transmitted are linear combinations of $\{\vec{X}_{r}\}_{r=1}^{n}$ with coefficients in the usual LNC sense, i.e.,
\begin{equation*}
     \vec{Y}=\sum_{r=1}^{n}\mu_{r}\vec{X}_{r}.
\end{equation*}
Each node maintains a subspace $Y_v$ that is the span of all packets known to it. In RLNC, when node $v$ sends a packet, $Out(\vec{Y})$, it chooses uniformly a packet from $Y_v$ by taking a random linear combination. If a deterministic algorithm is used, e.g., \cite{jaggi2005polynomial}, the coefficients are calculated based on the network topology. The code we suggest herein is only at the sources, and then utilizes any capacity-achieving, non-secure network code.

\subsection{Gossip in Oblivious Networks}\label{Pre}
While the results in this paper are tailored to LNC in the sense of \cite{li2003linear,ho2006random}, they easily apply to \emph{algebraic gossip} \cite{deb2006algebraic} as well. Such algebraic gossip protocol have been considered in the literature for many tasks, such as ensuring database consistency, computing aggregate information and other functions of the data \cite{demers1987epidemic,karp2000randomized,kempe2003gossip,boyd2006randomized}. We briefly describe this model.
The network operates in rounds. In each round $t$, the sources, as well as any legitimate node which has messages it previously received, pick a random node to exchange information with. The information exchange is done by either sending (PUSH) or receiving (PULL) a message. In algebraic gossip, the message sent by a node $v$ is simply a random linear combination of the vectors which form a basis for $Y_v$. The process stops when all the legitimate nodes have all the messages, i.e., have a full rank matrix. We briefly review the definitions and results from \cite{cohen2015network} for non-secure gossip networks, which we will use to formulate our result in this context.
\begin{definition}\label{PreDef2}
A network is \emph{oblivious} if the topology of the network, $G_t$ at time $t$, only depends on $t$, $G_{t^{\prime}}$ for any $t^{\prime} < t$ and some randomness. We call an oblivious network model furthermore i.i.d., if the topology $G_t$ is independent of $t$ and prior topologies.
\end{definition}
The importance of \Cref{PreDef2} lies in the fact that the topology of an oblivious network may change in time, but only based on the past topology and some external randomness. Topology does not change based on the data traversing the network.
Consider a single (uncoded) message, and the set of nodes $S_l$ which received that message after $l$ rounds. $S_l$ advances like a flooding process $F$. That is, $S_l \subseteq S_{l^{\prime}}\subseteq\mathcal{V}$ for $l\leq l^{\prime}$, with an absorbing state $\mathcal{V}$.
We say that $F$ stops at time $t$ if the message is received at all nodes after $t$ rounds.
Let $S_F$ be the random variable denoting the stopping time of $F$.
\begin{definition}\label{PreDef3}
We say an oblivious network with a vertex set $V$ floods in time $T$ with throughput $\alpha$ if there exists a prime power $q$ such that for every vertex $v \in V$ and every $k>0$ we have
\[
P[S_{F} \geq T + k] < q^{-\alpha k}.
\]
\end{definition}

\section{Main Results}\label{main results}
The three main results in this paper completely characterize the rate region for individually secure multi-source multicast.
We give tight achievability and converse, and a tight characterization of the number of rounds required under a gossip model.
Specifically, we first note that the individually secrecy constrain in \Cref{constraints} is $I(M_{s,j}; \textbf{Z}_w)=0$ for any single message $j$. However, ensuring the mapping from $\textbf{M}_s$ to $\textbf{X}_s$ mixes the messages appropriately, i.e., satisfies rank constraints similar to \cite[Lemma 3.1]{cai2011theory}, can, in fact, ensure Eve is kept ignorant on any \emph{set of $k-(w+k\epsilon)$ messages}, where $k\epsilon\geq 1$ is an integer and $\epsilon=o(k)$. That is, guarantee $k_s$-individual perfect secrecy with respect to any set of $k_s\leq k-(w+k\epsilon)$ messages. Let $\textbf{M}_{s}^{k_s}$ denote a set of $k_s$ messages from $s$.
Thus, the first main result is the following achievability theorem, which states that $k_s$-individually-secure multi-source multicast is achievable at rates up to the network min-cuts, using LNC.
\subsection{Individually Secure MSM}
\begin{theorem}\label{direct theorem1}
Assume an SMSM network $(\mathcal{V},\mathcal{E},\mathcal{S},\mathcal{D},w)$.
There exists a coding scheme which disseminates $k$ messages from each source in $\mathcal{S}$, to all destinations in $\mathcal{D}$, while keeping an eavesdropper which observes $w<k$ links ignorant with respect to any set of $k_s \leq k-(w+k\epsilon)$ messages individually, where $\epsilon = o(k)$, such that $I(\textbf{M}^{k_s}_{s};\textbf{Z}_w)=0$, if:
\begin{enumerate}
  \item For all $s\in \mathcal{S}$ and all $d\in \mathcal{D}$, $\rho(s,d) \ge k$.
  \item For all $d\in \mathcal{D}$, $\rho(\mathcal{S},d) \ge k|S|$.
\end{enumerate}
\end{theorem}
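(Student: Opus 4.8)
The plan is to separate the two requirements of Definition~\ref{constraints} --- reliable delivery of all $k|S|$ messages and individual secrecy of each message --- and to obtain both from a single linear scheme that precodes the messages at each source and then feeds them through an ordinary capacity-achieving (non-secure) multicast network code. For reliability I would invoke the standard multi-source multicast result: conditions~(1) and~(2) are exactly the cut-set (min-cut) constraints for delivering $k$ messages per source, so there is a linear network code (e.g., RLNC over a sufficiently large field \cite{ho2006random}, or a deterministic construction \cite{jaggi2005polynomial}) under which every destination $d$ receives $k|S|$ linearly independent combinations of the transmitted packets. Since each source's precoding map $f\colon \mathbf{M}_s\mapsto \mathbf{X}_s=T_s\mathbf{M}_s$ will be chosen invertible ($n=k$, so no rate expansion), every $d$ can invert its received matrix and then each $T_s$, recovering all messages; this yields the reliability clause with $\epsilon\to 0$.

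The heart of the argument is the secrecy clause, and here I would exploit that everything is linear over the coding field $\mathbb{F}_q$ while the messages are uniform and independent. Writing $\mathbf{X}=T\mathbf{M}$ with $T=\mathrm{diag}(T_1,\dots,T_{|S|})$ block-diagonal (each source may mix only its own $k$ messages), and letting $G_E$ denote the $w\times k|S|$ matrix of global encoding vectors on the $w$ links Eve taps, her observation is $\mathbf{Z}_w=G_E T\,\mathbf{M}$. The key reduction is the elementary fact that, for uniform and independent $\mathbf{M}$, a message is leaked precisely when its selecting unit vector lies in the row space of the observation matrix: $I(M_{s,j};\mathbf{Z}_w)=0$ if and only if $e_{s,j}\notin \mathrm{rowspace}(G_E T)$. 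The secrecy requirement thus becomes the purely combinatorial statement that every message-selecting unit vector avoids $\mathrm{rowspace}(G_E T)$, simultaneously for all $k|S|$ messages and for every admissible $G_E$.

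I would then design the precoding so that this avoidance holds. Because Eve can concentrate her taps near one source and collect up to $w$ independent combinations of that source's $k$ symbols alone, each per-source block $T_s$ must be chosen so that no unit vector $e_{s,j}$ falls into any $w$-dimensional subspace reachable there; taking each $T_s^{-1}$ to have rows in general position (an MDS/Cauchy structure, or a uniformly random invertible matrix over $\mathbb{F}_q$) achieves exactly this, and it is here that the regime $w<k$ is used, so that each message is masked by the $k-1$ other messages of its own source acting as keys. For the random construction I would bound, via a union bound over the $k|S|$ messages and the $\binom{|\mathcal{E}|}{w}$ eavesdropper subsets, the probability that some $e_{s,j}\in\mathrm{rowspace}(G_E T)$; this probability decays in the field size and in $k$, yielding the asymptotic individual secrecy $H(M_{s,j}\mid\mathbf{Z}_w)/H(M_{s,j})\to1$ anticipated for the random construction, the exact zero-leakage version being deferred to the structured code of \Cref{LinearCodes}. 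The same bound shows that the required field size is of the same order as for non-secure MSM.

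The step I expect to be the main obstacle is reconciling the distributed, block-diagonal constraint on the precoding with the worst-case eavesdropper. Unlike single-source secure coding, the sources cannot jointly mix their messages, so inter-source masking cannot be relied upon; the argument must guarantee secrecy even when Eve's $w$ observations are pure combinations of a single source, which is precisely what forces the $w<k$ regime and what makes the general-position property of each $T_s$ indispensable. A secondary difficulty is that reliability and secrecy must hold under one and the same network code: the generic code must simultaneously deliver full rank to every legitimate destination and present Eve with only a ``safe'' $w$-dimensional view, so the two requirements --- decodability at the $|D|$ destinations and subspace avoidance at all eavesdropper subsets --- have to be met by a common choice of field and code.
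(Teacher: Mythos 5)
Your overall architecture --- an invertible per-source precoder $T_s$ with $n=k$ followed by an ordinary capacity-achieving multicast code, with reliability read off from the cut-set conditions and secrecy reduced to the linear-algebraic condition $e_{s,j}\notin \mathrm{rowspace}(G_E T)$ --- is sound, and in fact it is closer to the paper's structured linear construction in \Cref{LinearCodes} (coset coding, with \Cref{lemma2} playing the role of your ``general position'' condition) than to the proof the paper actually gives for this theorem in \Cref{SecureGossipAlgorithmMultiplex}, which is a random-binning argument in which the last $w$ symbols of each message column act as a self-generated key selecting a codeword inside a bin. The equivalence ``$I(M_{s,j};\mathbf{Z}_w)=0$ iff $e_{s,j}\notin\mathrm{rowspace}(G_ET)$'' for uniform independent messages is correct, and your observation that Eve concentrating all $w$ taps on one source is the worst case matches the paper's reasoning.

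The gap is in the step where you claim that choosing $T_s^{-1}$ (or $T_s$) in general position ensures $e_{s,j}$ avoids ``any $w$-dimensional subspace reachable there.'' No fixed nonzero vector avoids \emph{every} $w$-dimensional subspace of $\mathbb{F}_q^k$ for $w\ge 1$, so this cannot hold against arbitrary rank-$w$ observation matrices $B_s$ acting on $\mathbf{X}_s$; indeed, for any coset/MDS precoder there exist $w$ independent linear combinations of the rows of $\mathbf{X}_s$ that expose a message. You therefore need one of two repairs, and you conflate them. Either (a) you restrict Eve's effective view to $w$ \emph{coordinates} (rows) of $\mathbf{X}_s$ --- this is what the paper does, via its (itself informal) ``Gaussian elimination'' reduction --- after which the MDS property (every $w$ columns of the key-code generator having rank $w$) does give deterministic avoidance of all $\binom{k}{w}$ coordinate subspaces, exactly \Cref{lemma2}; or (b) you keep arbitrary $G_E$ and run your union bound over the $k|S|$ messages and the $\binom{|\mathcal{E}|}{w}$ eavesdropper subsets against a random precoder or random network code. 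Option (b) does prove existence, but it requires roughly $q^{k-w}\gg k|S|\binom{|\mathcal{E}|}{w}$, i.e., a field size (or blocklength) growing with the network exactly as in Cai--Yeung-type schemes; your concluding sentence that ``the same bound shows that the required field size is of the same order as for non-secure MSM'' does not follow from that bound and contradicts the paper's alphabet-size claim. To recover the theorem as the paper intends it, the secrecy argument must rest on the intrinsic per-column key structure (the $w$ within-bin index symbols), which yields a per-eavesdropping-pattern guarantee without any union bound over link subsets.
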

In \Cref{L_To_Eve}, we prove the $k_s$-individual perfect secrecy constraint is indeed met. In particular, for any single message $j$ as given in \Cref{constraints}, we have
\begin{corollary}\label{corollary theorem1}
Assume an SMSM network $(\mathcal{V},\mathcal{E},\mathcal{S},\mathcal{D},w)$.
There exists a coding scheme which disseminates $k$ messages from each source in $\mathcal{S}$, to all destinations in $\mathcal{D}$, while keeping an eavesdropper which observes $w<k$ links ignorant with respect to each message individually if:
\begin{enumerate}
  \item For all $s\in \mathcal{S}$ and all $d\in \mathcal{D}$, $\rho(s,d) \ge k$.
  \item For all $d\in \mathcal{D}$, $\rho(\mathcal{S},d) \ge k|S|$.
\end{enumerate}
\end{corollary}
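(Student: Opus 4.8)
The plan is to keep the coding scheme constructed for \Cref{direct theorem1} and show that, with an appropriate choice of the source pre-coding maps, the per-message guarantee $I(M_{s,j};\textbf{Z}_w)=0$ strengthens to $I(\textbf{M}^{k-w}_{s};\textbf{Z}_w)=0$ for \emph{every} set of $k-w$ messages of source $s$. Since the min-cut conditions (1)--(2) are identical to those of \Cref{direct theorem1} and are used there only to guarantee reliable delivery of all $k|S|$ messages to every $d\in\mathcal{D}$, reliability is inherited verbatim and only the secrecy claim needs a new argument. First I would fix linear source encoders $f:\textbf{M}_s\mapsto\textbf{X}_s=G_s\textbf{M}_s$ with invertible $G_s$ (so $n=k$ and no rate is lost), followed by the capacity-achieving linear network code. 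As both the pre-coding and the network code are linear, the $w$ packets collected by the eavesdropper form a linear image of the stacked message vector $\textbf{M}=[\textbf{M}_{1};\ldots;\textbf{M}_{|S|}]$, namely $\textbf{Z}_w=C\,\textbf{M}$ with $C=A\,\mathrm{diag}(G_1,\ldots,G_{|S|})$, where $A$ (of rank at most $w$) is the transfer matrix from the source symbols to Eve's $w$ observed edges induced by the network code.

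Next I would translate secrecy into a rank condition on $C$. Because all messages are uniform and mutually independent over the field, a standard fact about linear functions of a uniform random vector gives, for any set $T$ of messages of source $s$,
\[
 I(\textbf{M}_s^{T};\textbf{Z}_w)=0 \iff \mathrm{rank}\big(C_{\bar T}\big)=\mathrm{rank}(C),
\]
where $C_{\bar T}$ is obtained from $C$ by deleting the columns indexed by $T$; intuitively, the contribution of $\textbf{M}_s^{T}$ to Eve's observation is masked exactly when it lies in the span of the contributions of the remaining, unknown messages. I would record this equivalence as a short lemma (it recovers the single-message case of \Cref{direct theorem1} when $|T|=1$). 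Taking $|T|=k-w$ and considering the worst case $\mathrm{rank}(C)=w$, the condition says that the $w$ source-$s$ columns complementary to $T$ — the $w\times w$ matrix $\big(A_sG_s\big)_{T^{\mathrm c}}$, with $A_s,G_s$ the source-$s$ blocks of $A$ and the pre-coder — must be nonsingular. Requiring this for \emph{every} $T$ of size $k-w$ is precisely the MDS / general-position condition that \emph{every} set of $w$ columns of the $w\times k$ block $C_s=A_sG_s$ be linearly independent.

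Finally I would realize this condition by choosing each $G_s$ as the generator matrix of an MDS code over the field used by the network code, and argue that $C_s=A_sG_s$ then has all $w$-column submatrices nonsingular for every rank-$w$ transfer matrix the network can induce. This last step is the main obstacle: $A$ is not ours to choose but is dictated by the adversary's selection of $w$ edges together with the capacity-achieving network code, so I must exhibit a \emph{single} family of pre-coders $\{G_s\}$ for which $C_s=A_sG_s$ is simultaneously MDS across \emph{all} $\binom{|\mathcal{E}|}{w}$ edge subsets and \emph{all} subsets $T$. I expect to settle this exactly as in the existence proofs for secure and non-secure network codes \cite[Lemma 3.1]{cai2011theory},\cite{jaggi2005polynomial,ho2006random}: the required nonsingularities are simultaneous non-vanishing conditions of finitely many polynomials in the code coefficients, so a Schwartz--Zippel / union-bound argument yields a valid assignment over a field no larger than that of non-secure MSM, matching the field-size claim accompanying \Cref{direct theorem1}. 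Throughout, $w\le k$ is implicit, so that the protected set $\textbf{M}_s^{k-w}$ is nonempty.
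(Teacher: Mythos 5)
Your rank identity $I(\textbf{M}_s^{T};\textbf{Z}_w)=0 \iff \mathrm{rank}(C_{\bar T})=\mathrm{rank}(C)$ is correct, and your overall route --- an invertible linear pre-coder per source followed by any capacity-achieving network code, with secrecy enforced by an MDS-type condition on the composite transfer matrix --- is genuinely different from the paper's proof of this corollary. The paper proves it in \Cref{L_To_Eve} by random binning: each $k$-bit column of $\textbf{M}_s$ indexes a bin of $2^{w+k\epsilon}$ i.i.d.\ codewords, and a shell-counting plus Chernoff argument shows every bin retains about $2^{k\epsilon}$ codewords consistent with Eve's $w$ observed symbols, so all $2^{k-w}$ bin indices stay (asymptotically) equiprobable and $I(\textbf{M}_s^{k-w};\textbf{Z}_w)\rightarrow 0$; the exact-zero version is supplied separately by the coset code of \Cref{LinearCodes}, whose \Cref{lemma2} is the analogue of your MDS condition. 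Your plan is much closer to that second construction than to the argument the corollary actually points to.

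There is, however, a genuine gap in your final existence step. First, reducing the secrecy condition to ``every $w\times w$ submatrix of $C_s=A_sG_s$ is nonsingular'' silently assumes $\mathrm{rank}(A_s)=w$; whenever $\rho(s;z)<w$ every such determinant vanishes identically for every network code and every $G_s$, so the MDS condition you propose to enforce is unachievable, and you must instead carry the weaker requirement $\mathrm{rank}(C_{\bar T})=\mathrm{rank}(C)$ (every $w$ columns of $C_s$ attaining $\mathrm{rank}(A_s)$, with the other sources' columns allowed to help) through the existence argument. Second, and more seriously, the Schwartz--Zippel union bound you invoke ranges over all $\binom{|\mathcal{E}|}{w}$ wiretap sets, all $\binom{k}{w}$ complements $T^{\mathrm{c}}$, and all sources, so it only certifies a good assignment over a field whose size grows like $\binom{|\mathcal{E}|}{w}$ --- precisely the Cai--Yeung blow-up \cite{cai2011secure} the paper is at pains to avoid, not ``a field no larger than that of non-secure MSM.'' Since the corollary as stated does not mention field size, your argument, once repaired on the first point, still establishes existence, but the concluding field-size claim is unjustified for this route. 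The paper sidesteps the issue entirely because its source map is a fixed per-bit-column binning over $GF(2)$, chosen independently of the network code and of which edges Eve taps, and its security analysis is a counting/equivocation argument valid uniformly over whichever $w$ independent combinations Eve obtains, so no joint polynomial non-vanishing over all wiretap sets is ever needed.
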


Note that under strong-secrecy, i.e., requiring Eve's mutual information \emph{with all messages simultaneously} to be zero, the problem of MSM is still open \cite{cai2009valuable},\cite[Section VI]{cai2011theory}. Clearly, if Eve observes $w$ links, a naive implementation, which increases the message rates from each source by $w$, can send $k$ messages from each source where $n \geq k + w + k\epsilon$ and achieve strong secrecy if:
\begin{enumerate}
  \item For all $s\in \mathcal{S}$ and all $d\in \mathcal{D}$, $\rho(s,d) \geq n$.
  \item  For all $d\in \mathcal{D}$, $\rho(S,d) \geq n|S|$.
\end{enumerate}
However, such an implementation is clearly wasteful, and, to date, the optimal strategy is unknown. Obviously, the required rates under strong secrecy are higher than the min-cut bound, as even for single-source multicast one needs $\rho(s_1,d_i) \geq k+w$ \cite{cai2011secure}.
Thus, the importance of \Cref{direct theorem1} is that under individual secrecy, not only the rate region can be characterized, and is achievable using linear network coding, individually secure MSM \emph{is possible up to the min-cuts in the network}.

In \Cref{strong-SMSM}, we do provide a code for Strong-SMSM. It is important to note that in the code suggested, the alphabet size does not increase with the network parameters due to the strong-security constraint.

Under an individual secrecy constraint, the converse below gives a stronger result than the min-cut bound.
\begin{theorem}\label{converse}
Assume an SMSM network $(\mathcal{V},\mathcal{E},\mathcal{S},\mathcal{D},w)$.
Under individual security for $k-w$ messages, that is, requiring $I(\textbf{M}^{k-w}_{s};\textbf{Z}_w)=0$ for any set of $k-w$ messages, one must have
\[
H(\textbf{M}_s) \leq \rho(s,d_i) - \rho(s,z) + w.
\]
\end{theorem}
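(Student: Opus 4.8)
The plan is to prove the converse by separating the $k-w$ messages on which joint security is imposed from the remaining $w$ messages, and bounding the two contributions independently. Write $\textbf{M}_s=(\textbf{M}_s^{k-w},\textbf{M}_s^{w})$. Since the $k$ messages of source $s$ are mutually independent,
\[
H(\textbf{M}_s)=H(\textbf{M}_s^{k-w})+H(\textbf{M}_s^{w}),
\]
and the second term is bounded trivially by $H(\textbf{M}_s^{w})\le w$, as it consists of $w$ messages each of entropy at most one packet. This already accounts for the additive $+w$ in the claim, so the whole task reduces to showing that the strongly-secured part obeys $H(\textbf{M}_s^{k-w})\le \rho(s,d_i)-\rho(s,z)$, which is precisely the Cai--Yeung-type secure-rate bound applied to the jointly secure subset.

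For that core inequality I would chain the two hypotheses. By the premise $I(\textbf{M}_s^{k-w};\textbf{Z}_w)=0$ we have $H(\textbf{M}_s^{k-w})=H(\textbf{M}_s^{k-w}\mid\textbf{Z}_w)$, and reliability at $d_i$ gives $H(\textbf{M}_s^{k-w}\mid\textbf{Y}_{d_i})\le\epsilon$ by Fano, hence $H(\textbf{M}_s^{k-w}\mid\textbf{Y}_{d_i},\textbf{Z}_w)\le\epsilon$. Therefore
\[
H(\textbf{M}_s^{k-w})=H(\textbf{M}_s^{k-w}\mid\textbf{Z}_w)\le I(\textbf{M}_s^{k-w};\textbf{Y}_{d_i}\mid\textbf{Z}_w)+\epsilon\le H(\textbf{Y}_{d_i}\mid\textbf{Z}_w)+\epsilon,
\]
so the proof collapses to the single cut-based estimate $H(\textbf{Y}_{d_i}\mid\textbf{Z}_w)\le\rho(s,d_i)-\rho(s,z)$.

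To establish that estimate I would condition throughout on the message matrices $\textbf{M}_{-s}$ of all other sources; by source independence this is harmless for the entropies of interest and turns $\textbf{Y}_{d_i}$ and $\textbf{Z}_w$ into functions of source $s$'s injected packets alone. Two applications of the max-flow/min-cut theorem then give $H(\textbf{Y}_{d_i}\mid\textbf{M}_{-s})\le\rho(s,d_i)$ and $H(\textbf{Z}_w\mid\textbf{M}_{-s})\le\rho(s,z)$, since all of $s$'s information reaching $d_i$, respectively the eavesdropper, must traverse a cut of the corresponding value. The decisive structural fact is that, because $d_i$ reliably recovers all of $\textbf{M}_s$, every function of $\textbf{M}_s$ --- in particular $\textbf{Z}_w$ --- is itself recoverable at $d_i$; thus $\textbf{Z}_w$ is (up to the $\epsilon$-slack) a function of $\textbf{Y}_{d_i}$ given $\textbf{M}_{-s}$, yielding
\[
H(\textbf{Y}_{d_i}\mid\textbf{Z}_w,\textbf{M}_{-s})=H(\textbf{Y}_{d_i}\mid\textbf{M}_{-s})-H(\textbf{Z}_w\mid\textbf{M}_{-s}).
\]

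The main obstacle is the sign of the second term: I need the first entropy at its maximal value $\rho(s,d_i)$ while the second must be taken at its minimal value $\rho(s,z)$, i.e.\ the matching lower bound $H(\textbf{Z}_w\mid\textbf{M}_{-s})\ge\rho(s,z)$. This is exactly where the individual-security constraint does real work: to keep the eavesdropper ignorant of each message the code must present her a full-rank, maximal-entropy view, so her observation cannot collapse below her own min-cut $\rho(s,z)$. Quantifying this, together with the bookkeeping needed to justify that security and reliability survive the conditioning on $\textbf{M}_{-s}$ and the $\epsilon$-slack of the random-coding and gossip variants, is the delicate part of the argument. Combining the two cut bounds gives $H(\textbf{Y}_{d_i}\mid\textbf{Z}_w)\le\rho(s,d_i)-\rho(s,z)$, and adding back the $w$ from the unprotected messages completes the converse; note finally that even when Eve's view degenerates and this lower bound is loose, the inequality $\rho(s,z)\le w$ keeps the overall claim valid, so the refined estimate is what exhibits the bound's tightness rather than what makes it hold.
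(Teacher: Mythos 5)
Your first half tracks the paper's argument: you split off the $w$ unprotected messages to account for the additive $w$, then chain $I(\textbf{M}_s^{k-w};\textbf{Z}_w)=0$ with reliability (Fano) to reduce everything to bounding $H(\textbf{Y}_{d_i}\mid\textbf{Z}_w)$. The paper does exactly this (conditioning on $\textbf{M}_s^w$ throughout, which is immaterial by independence). The divergence, and the genuine gap, is in how you finish. You rewrite $H(\textbf{Y}_{d_i}\mid\textbf{Z}_w,\textbf{M}_{-s})=H(\textbf{Y}_{d_i}\mid\textbf{M}_{-s})-H(\textbf{Z}_w\mid\textbf{M}_{-s})$ and then need the \emph{lower} bound $H(\textbf{Z}_w\mid\textbf{M}_{-s})\ge\rho(s,z)$, which you attribute to individual security forcing Eve to see ``a full-rank, maximal-entropy view.'' That implication is false: individual security constrains what Eve's observation reveals about the messages, not how much entropy it carries. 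A code that places constant or low-entropy symbols on Eve's edges satisfies $I(\textbf{M}_s^{k-w};\textbf{Z}_w)=0$ vacuously while $H(\textbf{Z}_w\mid\textbf{M}_{-s})$ is far below the graph-theoretic min-cut $\rho(s,z)$, so this step cannot be ``quantified'' from the hypotheses --- it is the missing idea, not a delicate detail. Your proposed fallback, that $\rho(s,z)\le w$ rescues the claim when the lower bound is loose, fails precisely where the theorem has content: as the paper's discussion after Theorem 2 makes explicit, the interesting regime is an eavesdropper whose independent observations exceed $w$ (e.g., $w+e$ links), where $\rho(s,z)>w$ and the bound is \emph{strictly below} the cut-set bound. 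There your argument proves nothing.

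The paper avoids needing any entropy lower bound on Eve's view. It writes $\textbf{Y}_d=(\textbf{Z},\bar{\textbf{Z}})$, where $\bar{\textbf{Z}}$ is the part of the destination's observation on links Eve does not see, so that $H(\textbf{Y}_d\mid\textbf{Z})=H(\bar{\textbf{Z}}\mid\textbf{Z})\le H(\bar{\textbf{Z}})$, and then bounds $H(\bar{\textbf{Z}})\le\rho(s,d_i)-\rho(s,z)$ as a pure capacity count: this is the maximum number of unit-capacity links feeding $d_i$ that can avoid an eavesdropper of min-cut $\rho(s,z)$. That is an \emph{upper} bound on the unobserved part rather than a lower bound on the observed part, and it is what makes the $-\rho(s,z)$ term appear without any appeal to the code filling Eve's links with entropy. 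To repair your proof you should replace your subtraction identity (which also silently assumes a deterministic encoder so that $\textbf{Z}_w$ is recoverable from $\textbf{Y}_{d_i}$ and $\textbf{M}_{-s}$) with this direct bound on the residual links.
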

This result should be interpreted as follows. If Eve observes $w$ independent links, and $\rho(s;z) = w$, then one must have $H(\textbf{M}_s) \leq \rho(s,d_i)$, which is the cut set bound. Of course, as mentioned before, the surprising part is that this bound is tight, hence such a level of security is available without any loss in rate. Yet, \Cref{converse} is slightly stronger, in the sense that if somehow Eve observes more then $w$ links, yet one still wishes to be secure with respect to any set of $k-w$ messages, then $H(\textbf{M}_s)$ should be \emph{strictly} smaller than $\rho(s,d_i)$ and by the same amount.
E.g., if Eve observes $w+e$ links, we have $H(\textbf{M}_s) \leq \rho(s,d_i)-e$. This means a linear increase in Eve's power results in a linear decrease in rate.

The achievability (direct) and the leakage proof are given in \Cref{SecureGossipAlgorithmMultiplex} using a random, non-linear code, and in \Cref{LinearCodes} using a structured linear code.
We note that using the non-linear code, the field size is determined only by the network coding scheme and its multicast structure (we elaborate about it in \Cref{Alphabet}), and there is no increase in the field size due to the security constraint. Of coarse, it requires $k$ to be large, but the code is over a binary field. On the down side, in the non-linear code, both the sources and the destinations must store a big codebook, which includes all the possible bins and codewords. We also note that the non-linear code is more straightforward, easy to understand and uses a simple binning scheme, which is common in information theory, yet is used here with a security twist. On the other hand, using the linear code suggested, there is no requirement for $k$ to grow, and the encoding/decoding is done by a linear function, such that it is not required to store a big codebook. Yet, to obtain the secrecy constraint, the code requires a field size greater than or equal to $q^{u}$ at the sources and destinations, where $u=c/\log_2(q)\geq k$. This leads to calculations over a large finite field, which are complex. In both cases, the sophisticated coding is only at the sources and destinations. The network code field size and the coding at intermediate nodes can remain small.

Finally, it is important to note that the constraint on how many messages Eve catches is set on the entire network, thus, Eve may catch $w$ messages of a single source, or $w$ messages from several sources all together. Secrecy is maintained in any case, as under individual secrecy, messages from other sources can only increase secrecy, and any network code cannot create linear combinations with other messages which reduce the secrecy level. This is another benefit of the model, and hence the network code can be any LNC, without an increase in alphabet size.
In that context, the codes given in \cite{chensecure,silva2009universal} for single source multi-cast, if considered under this individual secrecy setting, can also only increase secrecy when applied to multi-source multi-cast. Thus, such codes constitute an achievable scheme as well.
\subsection{Algebraic Gossip}
As mentioned earlier, the suggested code easily applies to algebraic gossip as well, since this can be viewed as linear network coding over a time-extended graph. The following result captures the number of rounds required to (individually) securely disseminate $k$ messages from each of the $|S|$ sources to all nodes in the network.
\begin{theorem}\label{direct theorem3}
Assume an oblivious network that floods in time $T$ with throughput $\alpha$.
Then, for $|S|$ nodes in the network with $k$ messages each, algebraic gossip spreads the $k|S|$ messages to all nodes with probability $1 - \epsilon$ after
\begin{equation*}
    T^{\prime} = T + \frac{1}{\alpha}(k|S|+\log \epsilon^{-1})
\end{equation*}
rounds, while keeping any eavesdropper which observes at most $w$ packets, ignorant with respect to any set of $k_s$ messages individually.
\end{theorem}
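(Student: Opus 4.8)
The plan is to decouple the statement into its two assertions --- reliable dissemination in $T'$ rounds and the individual-secrecy guarantee --- and handle them with separate tools, since the security argument is essentially topology-free and can be inherited from \Cref{direct theorem1} once reliability is in place.

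First I would realize $T'$ rounds of algebraic gossip as a single instance of linear network coding over the \emph{time-extended} graph $G^{T'}$, whose vertices are pairs (node, round) and whose edges record both the ``stay'' edges $v_t \to v_{t+1}$ and the call made in round $t$. Because the network is oblivious (\Cref{PreDef2}), the topology of $G^{T'}$ depends only on the schedule and the external randomness, never on the data, so RLNC over $G^{T'}$ is a legitimate capacity-achieving network code. Reliability then reduces to showing that with probability at least $1-\epsilon$ the min-cut from the $|S|$ sources to every node in $G^{T'}$ is at least $k|S|$. For this I would invoke the oblivious-network flooding framework of \cite{cohen2015network}: the flooding property (\Cref{PreDef3}) gives that a single degree of freedom reaches every node after the ``pipe-filling'' time $T$ with exponential tail $q^{-\alpha k}$, and, accumulating $k|S|$ innovative packets each arriving after the initial time $T$ at rate $\alpha$, the rank deficiency at a node after $T+t$ rounds decays like $q^{-(\alpha t - k|S|)}$. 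Setting $t = \frac{1}{\alpha}(k|S| + \log\epsilon^{-1})$ drives the total failure probability (after the union over nodes handled as in \cite{cohen2015network}) below $\epsilon$, which is exactly the claimed $T'$.

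For security I would apply, unchanged, the individually-secure source precoding that underlies \Cref{direct theorem1} and \Cref{corollary theorem1}: each source maps its $k$ messages $\textbf{M}_s$ to the transmitted matrix $\textbf{X}_s$ with $n=k$ and the appropriate mixing, and then lets the gossip process act as the ambient LNC. The key point is that the secrecy analysis of \Cref{corollary theorem1} depends only on the precoding at the sources together with the existence of \emph{some} capacity-achieving code downstream --- never on the particular topology or schedule. Since the reliability step already certifies that RLNC over $G^{T'}$ meets the min-cut conditions $\rho(s,d)\ge k$ and $\rho(\mathcal{S},d)\ge k|S|$ required by \Cref{direct theorem1}, \Cref{corollary theorem1} applies verbatim to the edges of $G^{T'}$, giving $I(\textbf{M}^{k_s}_s;\textbf{Z}_w)=0$ for any eavesdropper observing $w\le k-k_s$ packets. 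As noted after \Cref{converse}, packets drawn from other sources can only increase secrecy, so the budget $w\le k-k_s$ may be spent on any links of the whole time-extended network.

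I expect the reliability step to be the main obstacle: the delicate part is upgrading the single-message tail of \Cref{PreDef3} to the multi-message rank statement with the clean bound $T+\frac{1}{\alpha}(k|S|+\log\epsilon^{-1})$, in particular absorbing the union over all (node, round) pairs without incurring extra logarithmic factors. The security half is comparatively routine, reducing to checking that RLNC gossip is a bona fide capacity-achieving LNC so that \Cref{corollary theorem1} may be quoted directly.
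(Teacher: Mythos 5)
Your proposal is correct and takes essentially the same route as the paper: the paper likewise views algebraic gossip as LNC over a time-extended graph, obtains the round count $T'$ by directly citing \cite[Theorem 1]{cohen2015network}, and inherits the individual-secrecy guarantee from the source-precoding leakage analysis of \Cref{L_To_Eve}, which is independent of the downstream network code and schedule. The reliability step you flag as the main obstacle is not re-derived in the paper at all---it is simply quoted from \cite{cohen2015network}---so no new argument is needed there.
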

The proof is based on applying \Cref{direct theorem1} above, together with known results from the Gossip literature. The complete details are deferred to \Cref{Algebraic Gossip}.
Note that the result above is constant-optimal, as $T$ is the number of rounds required for a single message, hence one cannot expect less that $T^{\prime}$ above for $k|S|$ messages. This is a perfect pipelining property \cite{cohen2015network}, thus, surprisingly, one can gossip securely messages to all parties in the network, without any loss in rate and without any centralized mechanism for routing, key exchange or any other encryption mechanism, as long as the eavesdropper is interested in single messages.
\subsection{Alphabet Size}\label{Alphabet}
Without secrecy constraints, Jaggi \emph{et al}. proved that a field with size greater than or equal to the number of destinations is sufficient for multicast under LNC \cite{jaggi2005polynomial}.
However, this may not hold if it is required to keep an eavesdropper ignorant.
Cai \emph{et al}. \cite{cai2011secure} devised a code which requires a field of exponential size to obtain secrecy.
There, the field size must be larger than $\binom{|E|}{w}$.
Feldman \emph{et al}. \cite{feldman2004secure} showed that there exist networks that require a field of size at least $\Theta(|E|^{\frac{w}{2}})$.
In \cite{el2012secure}, the authors demonstrate that secure network coding can be considered as a network generalization of the wiretap channel of type II.
When $d$ is the number of destinations in the multicast connection, a field of size $\binom{2k^3d^2}{w-1+d}$ is sufficient, which is independent of $|V|$ and $|E|$ but is still exponential in other network parameters.

In the solution we suggest herein, the field size is determined only by the network coding scheme, that is, only by the requirement for \emph{reliability}, and is not increased by the individual-security constraints. In the gossip case, for example, since $q^{-\alpha k}=2^{-(\alpha \log q)k}$, any field size greater than or equal to $2$ will suffice, and an increase in the field size has only a logarithmic effect on the throughput, meaning only a logarithmic multiplier on the number of rounds $T^{\prime}$ required.

\section{Code Construction and a Proof\\for Individual-SMSM (\Cref{direct theorem1} and \Cref{direct theorem3})}\label{SecureGossipAlgorithmMultiplex}
At each source node $s\in\{1,\ldots,|S|\}$, we map each column of the message matrix $\textbf{M}_s$ to a column of the same length.
Specifically, as depicted in \Cref{fig:MultiplexWiretapCoding}, in the code construction phase, for each possible value for the partial column $M_{s,1}(i);\ldots;M_{s,k^{\prime}}(i), 1\leq i\leq c$, of length $k^{\prime}=k-(w+k\epsilon)$ in the message matrix, we generate a bin, containing several columns of length $k$. At this point, we only mention that $\epsilon$ is such that $k\epsilon$ is an integer, as it represents a number of bits. At the end of the leakage proof, we discuss how small $\epsilon$ has to be exactly and how it affects the mutual information.
The number of columns in a bin \emph{corresponds} to $w$, the number of packets that the eavesdropper can wiretap, in a relation that will be made formal in the sequel.
Then, to map the $i$-th column of the $s$-th message matrix, we select a column from the bin corresponding to $M_{s,1}(i);\ldots;M_{s,k^{\prime}}(i)$. The specific column within that bin is chosen according to $M_{s,k^{\prime}+1}(i);\ldots;M_{s,k}(i)$.
That is, the lower part of the original column points to the bin, and the upper part of the original column serves as an \emph{index} in order to choose the right column from the bin.
This way, a new, $k\times c$ message matrix $\textbf{X}_s$ is created. This message matrix contains $k$ new messages of the same size $c$.
\ifdouble
\begin{figure}
  \centering
  \includegraphics[trim= 0.2cm 0cm 0cm 0cm,clip,scale=1]{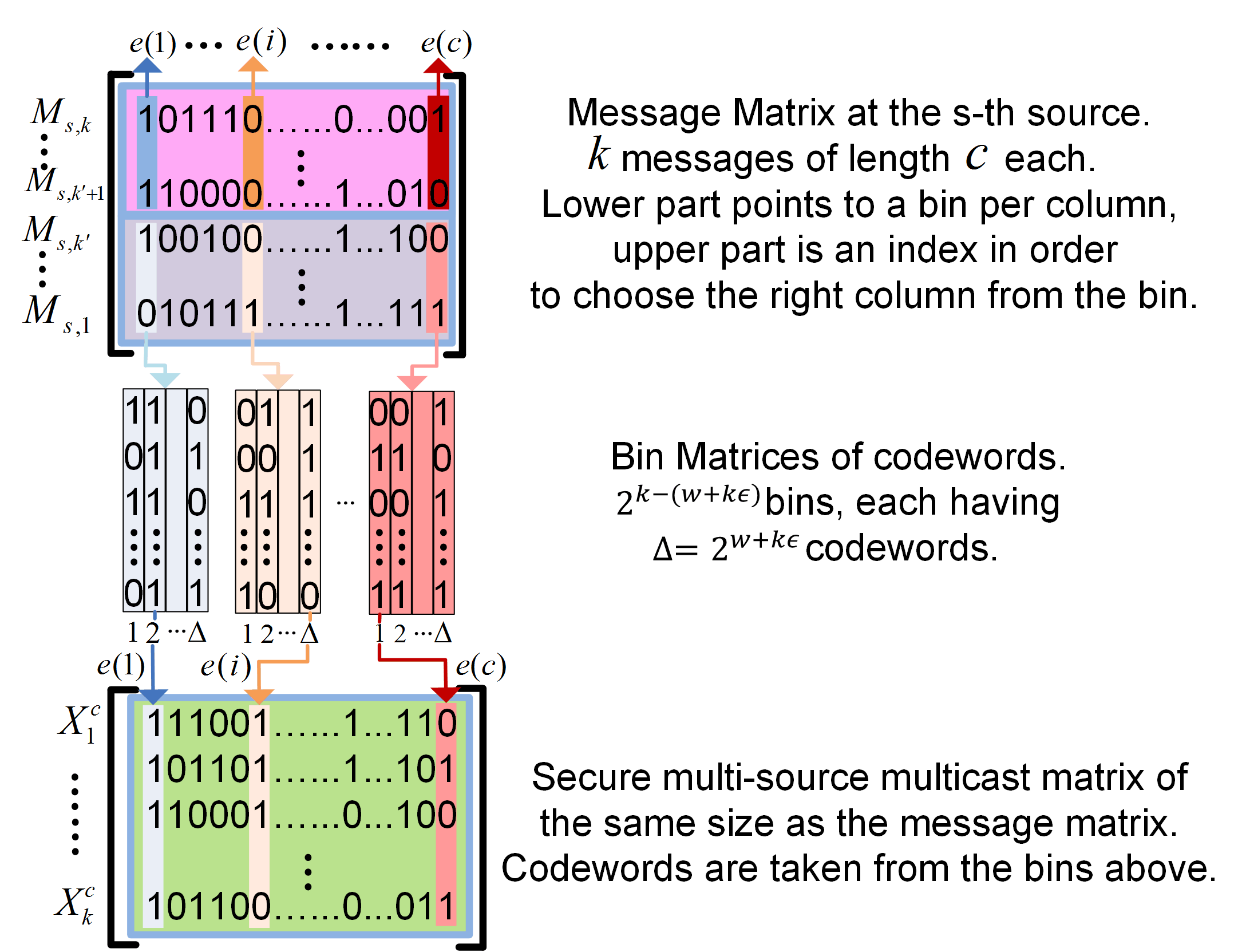}
  \caption{Binning and source encoding process for Individual-SMSM.}
  \label{fig:MultiplexWiretapCoding}
\end{figure}
\else
\begin{figure}
  \centering
  \includegraphics[trim= 0cm 0cm 0cm 0cm,clip,scale=1.25]{MultiplexWiretap_coding11_one_colM.png}
  \caption{Binning and source encoding process for Individual-SMSM.}
  \label{fig:MultiplexWiretapCoding}
\end{figure}
\fi
We may now turn to the detailed construction and analysis.
\subsubsection{Codebook Generation}
Set $\Delta = 2^{w+k\epsilon}$. Let $P(x)\sim Bernoulli(1/2)$. Using a distribution $P(X^k)=\prod^{k}_{j=1}P(x_j)$, for each possible column $M_1(i);\ldots;M_{k^{\prime}}(i)$ in the message matrix, that is, $2^{k-(w+k\epsilon)}$ possibilities, generate $\Delta$ independent and identically distributed codewords $x^{k}(e)$, $1 \leq e \leq \Delta$. Thus, we have $2^{k-(w+k\epsilon)}$ bins, each of size $2^{w+k\epsilon}$.
Note that the length of the columns in the bins is $k$, thus the codebook matrix is of the same size as $\textbf{M}_s$. The codebook is depicted in \Cref{fig:MultiplexWiretapCoding}.
\subsubsection{Source and legitimate Node encodings}
At the $s$-th source node, the encoder selects, for each column $i$ of bits $M_{s,1}(i);\ldots;M_{s,k^{\prime}}(i)$, one codeword, $x^{k}(e(i))$, from the bin indexed by $M_{s,1}(i);\ldots;M_{s,k^{\prime}}(i)$, where $e(i)=M_{s,k^{\prime+1}}(i);\ldots;M_{s,k}(i)$.
That is, $k^{\prime}=k-(w+k\epsilon)$ bits of the column choose the bin, and the remaining $w+k\epsilon$ bits choose the codeword within the bin.

Then, similar to many RLNC protocols, the sources transmit linear combinations of the rows, with random coefficients. Nodes transmit random linear combinations of the vectors in $\mathcal{S}_v$, which is maintained by each node according to the messages received at the node.
\subsection{Reliability}\label{SecureGossipAlgorithmMultiplexReliability}
The reliability proof using RLNC is almost a direct consequence of \cite{ho2006random}. Clearly, the min-cut is given by \Cref{direct theorem1}. Hence, the legitimate nodes can easily reconstruct $\textbf{X}_s$ for each $s$ (simple, non-secure, multi-source multicast). Then, each destination maps $\textbf{X}_s$ back to $\textbf{M}_s$, as per column $1\leq i\leq c$, the index of the bin in which the codeword $\textbf{X}_s$(i) resides is $M_{s,1}(i);\ldots;M_{s,k^{\prime}}(i)$ and the index of the codeword location in that bin is $M_{s,k^{\prime}+1}(i);\ldots;M_{s,k}(i)$. It is important to note that since the codebook is generated randomly, the mapping is not exactly 1:1 and there is a possibility for a repetition of codewords. However, averaged over all messages, the error probability from such a repetition is negligible, and this scheme guarantees successful decoding with high probability. Considering the number of bins and the number of codewords in each bin as given in the codebook generation phase, the analysis on the probability of successfully decoding $\textbf{M}_s(i)$ from $\textbf{X}_s(i)$ is a direct consequence using standard analysis of random coding \cite[Section 3.4]{C13}. Note also that such a repetition can also be circumvented using a random permutation of the columns rather than random binning, though analysis is more complicated due to the memory in the process.

An example, obtaining both reliability and individual secrecy for two sources, with two messages each and four legitimate destination nodes, where the eavesdropper min-cut is at most $1$, is given in Figure \ref{fig:wiretap_Secure gossip}.
Note that secure communication with respect to \emph{one message} is possible while sending two messages from each source to all destinations.

\begin{remark}
The proposed binning scheme has many similarities with the random binning scheme introduced by Wyner's seminal work on degraded wiretap channel, which relies on information theoretic principles to obtain Physical Layer security \cite{C2}. Since its publication in 1975, Wyner’s binning scheme was utilized by numerous studies, models and solutions, e.g., comprehensive surveys can be found in \cite{C13,zhou2016physical}. In the conventional binning scheme, each message is associated with a bin (i.e., the number of bins equals the number of messages) which contains multiple codewords. The number of codewords per bin depends on the capacity of the eavesdropper. Accordingly, the sent codeword incorporates both the bin index and a (private) random key which is used to select the codeword from the corresponding bin (e.g., \cite[Chapter 3]{C13}).

However, there are a few differences between the typical binning scheme and the scheme proposed herein, as well as the code construction phase. Most importantly, in the suggested scheme we utilize the binning scheme differently such that we exploit some of the messages to protect the other messages and vice versa. In particular, in the suggested scheme each column in the message matrix is partitioned into two; the first partition points to a specific bin (similar to the message itself in the conventional binning scheme), while the second partition points to a specific codeword (in contrast to the conventional binning scheme in which the codeword is chosen at random). Note that as a result, each message-matrix-column points to one possible codeword while in the usual scheme each message points to multiple codewords. Specifically, in order to achieve the full capacity of the network, the number of bins (the first partition) is equal to the total number of messages minus the number of packets that the eavesdropper can wiretap, and the number of codewords per bin (the second partition) equals the number of packets that the eavesdropper can wiretap. This, of course, means there is no private key, and part of the information is used as a key to protect the other part. The final transformation from messages to codewords is 1:1 rather than one to many.

Second, in the suggested scheme, prior to the message coding, we mix the message matrix such that each bit in the transmitted packet is associated with a mixture of bits in the original messages (note that this message mixing is done prior to and independently from the network combinations performed by the network coding mechanism). Specifically, instead of coding the message matrix rows (the messages themselves), we code the columns of the message matrix (auxiliary messages each of which is composed from one bit from each original message). Note that we still send rows of the resulting coded matrix which mean that each bit in each captured packet conveys a coded version of many bits each coming from a different message. Since messages protect one another in our scheme, this trick allows us to require only that the messages are independent, but bits within a message are not required to be independent.

Obviously, there are also several limitations to the suggested scheme compared to the vanilla-version of the Wiretap code. First, it is important to note that achieving full capacity of the network requires a compromise on the secrecy level. Specifically, the suggested scheme ensures ``Individual Secrecy", which guarantees that the eavesdropper has zero mutual information with each message individually, rather than ``conventional secrecy constraint" which requires zero leakage of information to the eavesdropper, when normalized by the message length, independently from any other message. Second, as mentioned, attaining secrecy while utilizing the message mixing technique requires that the messages be independent and uniformly distributed (a common assumption in the related literature, e.g., \cite{chensecure,silva2009universal,cai2011theory}). Third, we require the total number of messages to be sufficiently large (large $k$), as the mutual information to the eavesdropper decays as a function of the total number of messages (\Cref{L_To_Eve}). From a secrecy perspective, however, we do not need message length to grow.
\end{remark}

\subsection{Information Leakage at the Eavesdropper}\label{L_To_Eve}
We now prove the $k_s$-individual security constraint is met, that is, $I(\textbf{M}^{k_s}_{s};\textbf{Z}_w)\rightarrow 0$ as $k \rightarrow \infty$ for any set of $k_s\leq k-(w+k\epsilon)$ messages, where $\textbf{Z}_w = \textbf{W}[\textbf{X}_1, \cdots, \textbf{X}_{|S|}]$ and $\textbf{W}$ is an arbitrary encoding matrix due to the network code. At the end of this subsection, we also quantify the rate at which the mutual information can decay as a function of $k$.

In particular, for the simplest individual secrecy constraint, we wish to show that $I(M_{s,j};\textbf{Z}_{w})$ is small for all $s\in\mathcal{S}$ and all $j$. However, herein we prove a stronger result, by showing that given $\textbf{Z}_w$, Eve's information, all possibilities for any set of $k_s \leq k-(w+k\epsilon)$ messages $\textbf{M}^{k_s}_{s}$ are almost equally likely. Hence Eve has no intelligent estimation for $\textbf{M}^{k_s}_{s}$ and $M_{s,j}$.

Denote by $\mathcal{C}_k$ the random codebook and by $\textbf{X}_s$ the set of codewords corresponding to $\vec{M}_{s,1}\ldots \vec{M}_{s,k}$.
To analyze the information leakage at the eavesdropper, note that Eve has access to at most $w$ linear combinations on the rows of $\textbf{X}_s$.

Next, note that the columns of $\textbf{X}_s$ are independent (by the construction of the codebook, creating $\textbf{X}_s$ is done independently per-column; $c$ columns are used only to reduce the NC overhead). Hence, it suffices to consider the information leakage for each column $i \in \{1,\ldots,c\}$ from $\textbf{X}_s$ separately, denoted by $\textbf{X}_s$(i).

For each column $i$ of $\textbf{M}_s$, the encoder has $2^{k^\prime}$ bins, with $\Delta$ independent and identically distributed codewords in each, out of which one is selected. Hence, there is an exponential number of codewords, from the eavesdropper's perspective, that can generate a column in $\textbf{X}_s$, and we require that Eve is still confused even given the $w$ linear combinations on each column.
Let $\textbf{Z}_w(i)$ be the $w$ linear combinations Eve has on column $i$.

Hence, when the number of codewords is $2^k$,  given the $w$ linear combinations from each column in $\textbf{Z}_w(i)$, the eavesdropper has at most $2^k(1/2)^w=2^{(k-w)}$ possible codewords.
Denote $l=k-w$. Similar to the technique used in \cite{dey2015sufficiently} to prove that \emph{myopic adversaries}
are blind, we define by the shell $\mathcal{S}h (\textbf{Z}_w(i),l)$, the set of all $k$-tuples consistent with $\textbf{Z}_w(i)$. Clearly, there are $2^l$ tuples in $\mathcal{S}h (\textbf{Z}_w(i),l)$.
See \Cref{figure:StrongSecrecy} for a graphical illustration.
\ifdouble
\begin{figure}
  \centering
  \includegraphics[trim=0cm 0cm 0cm 0cm,clip,scale=1.2]{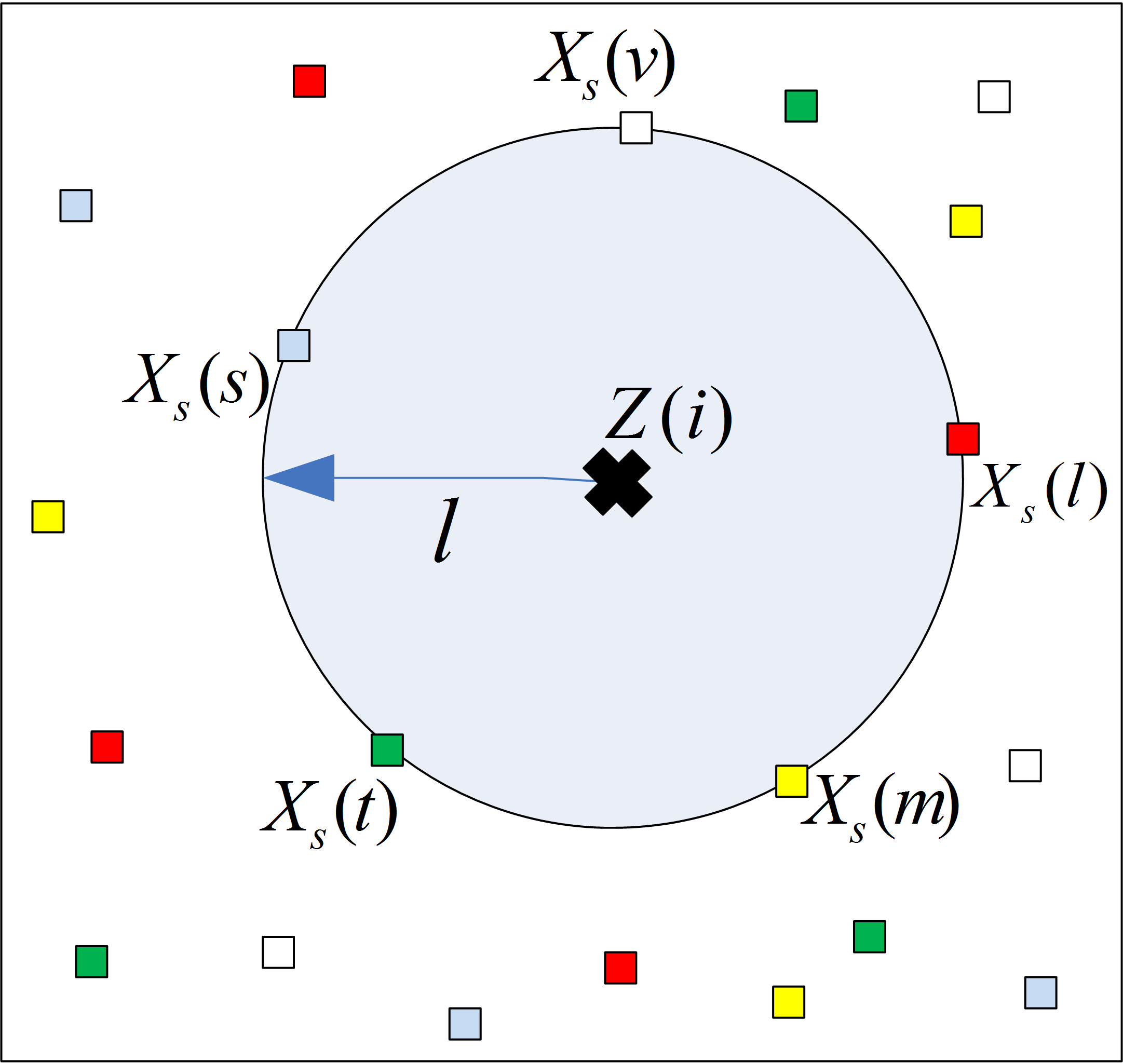}
  \caption{Codewords for Individual-SMSM algorithm lie exactly in a ball  of radius $l = k - w$ around $Z$.}
  \label{figure:StrongSecrecy}
\end{figure}
\else
\begin{figure}
  \centering
  \includegraphics[trim=0cm 0cm 0cm 0cm,clip,scale=1.5]{StrongSecrecy4.png}
  \caption{Codewords for Individual-SMSM algorithm lie exactly in a ball  of radius $l = k - w$ around $Z$.}
  \label{figure:StrongSecrecy}
\end{figure}
\fi
We assume Eve has the codebook, yet does not know which column from each bin is selected to be the codeword. Hence, we wish to show that given $\textbf{Z}_w(i)$, Eve will have at least one candidate per bin. The probability for a codeword to fall in a given shell is
\ifdouble
\begin{multline*}
Pr(\textbf{X}_{s}^{k}(i) \in \mathcal{C}_k \cap \textbf{X}_{s}^{k}(i) \in \mathcal{S}h(\textbf{Z}_w(i),l))\\
= \frac{\mathcal{V}ol(\mathcal{S}h(\textbf{Z}_w(i),l))}{2^k} =  \frac{2^{\left(k-w\right)}}{2^{k}}.
\end{multline*}
\else
\[
Pr(\textbf{X}_{s}^{k}(i) \in \mathcal{C}_k \cap \textbf{X}_{s}^{k}(i) \in \mathcal{S}h(\textbf{Z}_w(i),l))
= \frac{\mathcal{V}ol(\mathcal{S}h(\textbf{Z}_w(i),l))}{2^k} =  \frac{2^{\left(k-w\right)}}{2^{k}}.
\]
\fi
In each bin of $\mathcal{C}_k$, we have $\Delta =2^{w+k\epsilon}$ codewords. Thus, the \emph{expected} number of codewords Eve sees on a shell, \emph{per bin} is
\begin{equation*}
    \textbf{E}\left[|\{m(i):X^k(i)\in \mathcal{S}h(Z(i),l)\}|\right]  = \frac{2^{w+k\epsilon}*2^{k-w}}{2^k} = 2^{k\epsilon}.
\end{equation*}
Hence, we can conclude that on average, and if $k\epsilon$ is not too small, for every column in $\textbf{M}_s$ Eve has a few possibilities \emph{in each bin}, hence cannot locate the right bin. However, it is still important to show that all bins have (asymptotically) equally likely number of candidate codewords, hence Eve cannot locate a preferred bin. In other words, we proved that the average number of candidate codewords per column is $2^{k\epsilon}$. We now wish to show that the \emph{actual} number Eve has in each bin is concentrated around this average.

To this end, we show that now the probability that the actual number of options deviates from the average by more than $\varepsilon$ is small. Define the event
\ifdouble
\begin{multline*}
 \mathcal{E}_{C_1}(Z(i),l):= \{(1-\varepsilon^{\prime})2^{k\epsilon}\leq  \\
 |m(i):X_{s}^{k}(i)\in \mathcal{S}h(\textbf{Z}_w(i),l)| \leq (1+\varepsilon^{\prime})2^{k\epsilon}\}.
\end{multline*}
\else
\[
 \mathcal{E}_{C_1}(Z(i),l):= \{(1-\varepsilon^{\prime})2^{k\epsilon}\leq
 |m(i):X_{s}^{k}(i)\in \mathcal{S}h(\textbf{Z}_w(i),l)| \leq (1+\varepsilon^{\prime})2^{k\epsilon}\}.
\]
\fi
By the Chernoff bound, we have
\begin{equation*}
   Pr(\mathcal{E}_{C_1}(Z(i),l)) \geq 1- 2^{-\varepsilon^{\prime}2^{k\epsilon}}.
\end{equation*}

Finally, we are now able to show that the mutual information is indeed negligible. Denote by $\textbf{1}_{\mathcal{E}_{C_1}}$ the indicator for the event where the actual number of options per column does not deviates from the average. We thus have
\ifdouble
\begin{align*}
& I(\textbf{M}^{k_s}_{s}(i);\textbf{Z}_w(i)) \\
&= H(\textbf{M}^{k_s}_{s}(i))-H(\textbf{M}^{k_s}_{s}(i)|\textbf{Z}_w(i)) \\
&\leq k_s-H(\textbf{M}^{k_s}_{s}(i)|\textbf{Z}_w(i))\\
&\leq k_s-H(\textbf{M}^{k_s}_{s}(i)|\textbf{Z}_w(i),\textbf{1}_{\mathcal{E}_{C_1}})\\
&= k_s -[P(\mathcal{E}_{C_1}^c)H(\textbf{M}^{k_s}_{s}(i)|\textbf{Z}_w(i),\textbf{1}_{\mathcal{E}_{C_1}} = 0)\\
&\hspace{2.5cm} +P(\mathcal{E}_{C_1})H(\textbf{M}^{k_s}_{s}(i)|\textbf{Z}_w(i),\textbf{1}_{\mathcal{E}_{C_1}} = 1)]\\
&\leq k_s -(1- 2^{-\varepsilon^{\prime}2^{k\epsilon}})H(\textbf{M}^{k_s}_{s}(i)|\textbf{Z}_w(i),\textbf{1}_{\mathcal{E}_{C_1}} = 1)\\
&\leq k_s - H(\textbf{M}^{k_s}_{s}(i)|\textbf{Z}_w(i),\textbf{1}_{\mathcal{E}_{C_1}} = 1) + 2^{-\varepsilon^{\prime}2^{k\epsilon}}k_s.
\end{align*}
\else
\begin{align*}
& I(\textbf{M}^{k_s}_{s}(i);\textbf{Z}_w(i)) \\
&= H(\textbf{M}^{k_s}_{s}(i))-H(\textbf{M}^{k_s}_{s}(i)|\textbf{Z}_w(i)) \\
&\leq k_s-H(\textbf{M}^{k_s}_{s}(i)|\textbf{Z}_w(i))\\
&\leq k_s-H(\textbf{M}^{k_s}_{s}(i)|\textbf{Z}_w(i),\textbf{1}_{\mathcal{E}_{C_1}})\\
&= k_s -[P(\mathcal{E}_{C_1}^c)H(\textbf{M}^{k_s}_{s}(i)|\textbf{Z}_w(i),\textbf{1}_{\mathcal{E}_{C_1}} = 0) +P(\mathcal{E}_{C_1})H(\textbf{M}^{k_s}_{s}(i)|\textbf{Z}_w(i),\textbf{1}_{\mathcal{E}_{C_1}} = 1)]\\
&\leq k_s -(1- 2^{-\varepsilon^{\prime}2^{k\epsilon}})H(\textbf{M}^{k_s}_{s}(i)|\textbf{Z}_w(i),\textbf{1}_{\mathcal{E}_{C_1}} = 1)\\
&\leq k_s - H(\textbf{M}^{k_s}_{s}(i)|\textbf{Z}_w(i),\textbf{1}_{\mathcal{E}_{C_1}} = 1) + 2^{-\varepsilon^{\prime}2^{k\epsilon}}k_s.
\end{align*}
\fi
However, since we now condition on $\textbf{1}_{\mathcal{E}_{C_1}} = 1$, that is, \emph{there was no deviation from the average by more than $\varepsilon^{\prime}$}, Eve's distribution on the bins is almost uniform, and we have $H(\textbf{M}^{k_s}_{s}(i)|\textbf{Z}_w(i),\textbf{1}_{\mathcal{E}_{C_1}} = 1) \ge k_s(1- \varepsilon^{\prime})$. Thus,
\[
I(\textbf{M}^{k_s}_{s}(i);\textbf{Z}_w(i)) \leq k_s (\varepsilon^{\prime} +2^{-\varepsilon^{\prime}2^{k\epsilon}}).
\]
Note that we are free to choose small $\varepsilon^\prime$ and $\epsilon$ as long as $k\epsilon$ is an integer. E.g., taking $\varepsilon^\prime = k^{-m}$ and $\epsilon$ such that $k\epsilon = \left \lceil{(m+1)\log k}\right \rceil$, for some $m \ge 2$, gives $I(\textbf{M}^{k_s}_{s}(i);\textbf{Z}_w(i)) = O(k^{-m+1})$.
\subsection{Algebraic Gossip (\Cref{direct theorem3})}\label{Algebraic Gossip}
Revisiting the above proof of \Cref{direct theorem1}, it is clear that the algorithm suggested therein is a source code, followed by any good NC scheme which can be used to disseminate the encoded packages over the network. This is actually a \emph{universal code} \cite{silva2011universal,matsumoto2017universal}, in the sense that it can be applied to any network without requiring knowledge of, or any modifications on, the NC. Special coding is required only at the sources. Hence, using the same source code as in \Cref{direct theorem1} and then disseminating the encoded messages by a gossip protocol as in \cite{haeupler2011analyzing}, which uses RLNC, the reliability at the legitimate nodes is a direct consequence of \cite[Theorem 1]{cohen2015network}. Moreover, the mixing that the gossip protocol does over the network is the \emph{same as RLNC}, hence, the information leakage at the eavesdropper is as proved in \Cref{L_To_Eve}. Thus, compared to only a reliability constraint, the number of rounds required for both reliability and individual-secrecy is exactly the same as in the original non-secure gossip protocol.

\section{Converse (\Cref{converse})}\label{converse_sec}
In this section, we derive a converse result, which shows that under individual secrecy on a group of $k-w$ messages, not only the rate is bounded by the min-cut, but, more importantly, any independent link that Eve observes above $w$ will require to reduce the rate \emph{at the same amount} in order to achieve both reliability and secrecy.
Thus, the converse result derived herein will be specific for the ``individual secrecy" constraint given in \Cref{constraints}, and its extension to any set of $k-w$ messages, and hence extend on the well known cut-set bound.

Let $\bar{\textbf{Z}}$ denote the random variable corresponding to the links which are not available to Eve. Hence, $\textbf{Y}_d = (\textbf{Z},\bar{\textbf{Z}})$. Let $\textbf{M}_{s}^{k-w}$ denote a set of $k-w$ messages, and $\textbf{M}_{s}^w$ denote the remaining $w$. We will show that reliability, that is $H(\textbf{M}_s|\textbf{Y}_d) = 0$, and individual secrecy, that is, $I(\textbf{M}_{s}^{k-w};\textbf{Z}) = 0$, imply that $H(\textbf{M}_{s})$ is upper bounded by the term in \Cref{converse}.
\ifdouble
\begin{align}\label{eq:R_s}
&H(\textbf{M}_{s})
\nonumber\\
&= H(\textbf{M}_{s}^{k-w}|\textbf{M}_{s}^w) +H(\textbf{M}_{s}^w)
\nonumber\\
&\stackrel{(a)}{\leq} I(\textbf{M}_{s}^{k-w};\textbf{Y}_d|\textbf{M}_{s}^w) + H(\textbf{M}_{s}^{k-w}|\textbf{Y}_d) + w
\nonumber\\
&\stackrel{(b)}{=} I(\textbf{M}_{s}^{k-w};\textbf{Z},\bar{\textbf{Z}}|\textbf{M}_{s}^w)+w
\nonumber\\
&= I(\textbf{M}_{s}^{k-w};\textbf{Z}|\textbf{M}_{s}^w)+ I(\textbf{M}_{s}^{k-w};\bar{\textbf{Z}}|\textbf{Z},\textbf{M}_{s}^w)+w
\nonumber\\
&= I(\textbf{M}_{s}^{k-w};\textbf{Z})+ I(\textbf{M}_{s}^w;\textbf{Z}|\textbf{M}_{s}^{k-w}) - I(\textbf{Z};\textbf{M}_{s}^w)
\nonumber\\
& \hspace{1.2cm} + I(\textbf{M}_{s}^{k-w};\bar{\textbf{Z}}|\textbf{Z},\textbf{M}_{s}^w) +w
\nonumber\\
&\stackrel{(c)}{=} I(\textbf{M}_{s}^w;\textbf{Z}|\textbf{M}_{s}^{k-w}) - I(\textbf{Z};\textbf{M}_{s}^w) + I(\textbf{M}_{s}^{k-w};\bar{\textbf{Z}}|\textbf{Z},\textbf{M}_{s}^w) +w
\nonumber\\
&= I(\textbf{M}_{s}^w;\textbf{Z}|\textbf{M}_{s}^{k-w}) - I(\textbf{Z};\textbf{M}_{s}^w)
\nonumber\\
&\hspace{1.2cm}+ H(\bar{\textbf{Z}}|\textbf{Z},\textbf{M}_{s}^w) - H(\bar{\textbf{Z}}|\textbf{M}_{s}^{k-w},\textbf{Z},\textbf{M}_{s}^w)+w
\nonumber\\
&= I(\textbf{M}_{s}^w;\textbf{Z}|\textbf{M}_{s}^{k-w}) - I(\textbf{Z};\textbf{M}_{s}^w) + H(\bar{\textbf{Z}}|\textbf{Z},\textbf{M}_{s}^w) +w
\nonumber\\
&= H(\textbf{M}_{s}^w|\textbf{M}_{s}^{k-w})- H(\textbf{M}_{s}^w|\textbf{Z}, \textbf{M}_{s}^{k-w})-  H(\textbf{M}_{s}^w)
\nonumber\\
&\hspace{1.2cm} + H(\textbf{M}_{s}^w|\textbf{Z}) + H(\bar{\textbf{Z}}|\textbf{Z},\textbf{M}_{s}^w) + H(\bar{\textbf{Z}})- H(\bar{\textbf{Z}})+w
\nonumber\\
&=I(\textbf{M}_{s}^w ;\textbf{M}_{s}^{k-w}|\textbf{Z}) -I(\bar{\textbf{Z}}; \textbf{Z},\textbf{M}_{s}^w)+ H(\bar{\textbf{Z}})+w
\nonumber
\end{align}
\begin{align}
&= I(\textbf{M}_{s}^w ;\textbf{M}_{s}^{k-w}|\textbf{Z}) -I(\bar{\textbf{Z}}; \textbf{M}_{s}^w |\textbf{Z}) - I(\bar{\textbf{Z}}; \textbf{Z})+ H(\bar{\textbf{Z}})+w
\nonumber\\
& \leq  I(\textbf{M}_{s}^w ;\textbf{M}_{s}^{k-w}|\textbf{Z}) -I(\bar{\textbf{Z}}; \textbf{M}_{s}^w |\textbf{Z}) + H(\bar{\textbf{Z}})+w
\nonumber\\
& = H(\textbf{M}_{s}^w|\textbf{Z}) -H(\textbf{M}_{s}^w | \textbf{Z} ,\textbf{M}_{s}^{k-w}) - H(\textbf{M}_{s}^w|\textbf{Z})
\nonumber\\
& \hspace{1.2cm}+H(\textbf{M}_{s}^w | \textbf{Z} ,\bar{\textbf{Z}})+ H(\bar{\textbf{Z}})+w
\nonumber\\
&\stackrel{(d)}{\leq} H(\bar{\textbf{Z}})+w
\nonumber\\
& \stackrel{(e)}{\leq} \rho(s_i;d_i)-\rho(s_i;z)+w, \nonumber
\end{align}
\else
\begin{align}\label{eq:R_s}
&H(\textbf{M}_{s})
\nonumber\\
&= H(\textbf{M}_{s}^{k-w}|\textbf{M}_{s}^w) +H(\textbf{M}_{s}^w)
\nonumber\\
&\stackrel{(a)}{\leq} I(\textbf{M}_{s}^{k-w};\textbf{Y}_d|\textbf{M}_{s}^w) + H(\textbf{M}_{s}^{k-w}|\textbf{Y}_d) + w
\nonumber\\
&\stackrel{(b)}{=} I(\textbf{M}_{s}^{k-w};\textbf{Z},\bar{\textbf{Z}}|\textbf{M}_{s}^w)+w
\nonumber\\
&= I(\textbf{M}_{s}^{k-w};\textbf{Z}|\textbf{M}_{s}^w)+ I(\textbf{M}_{s}^{k-w};\bar{\textbf{Z}}|\textbf{Z},\textbf{M}_{s}^w)+w
\nonumber\\
&= I(\textbf{M}_{s}^{k-w};\textbf{Z})+ I(\textbf{M}_{s}^w;\textbf{Z}|\textbf{M}_{s}^{k-w}) - I(\textbf{Z};\textbf{M}_{s}^w) + I(\textbf{M}_{s}^{k-w};\bar{\textbf{Z}}|\textbf{Z},\textbf{M}_{s}^w) +w
\nonumber\\
&\stackrel{(c)}{=} I(\textbf{M}_{s}^w;\textbf{Z}|\textbf{M}_{s}^{k-w}) - I(\textbf{Z};\textbf{M}_{s}^w) + I(\textbf{M}_{s}^{k-w};\bar{\textbf{Z}}|\textbf{Z},\textbf{M}_{s}^w) +w
\nonumber\\
&= I(\textbf{M}_{s}^w;\textbf{Z}|\textbf{M}_{s}^{k-w}) - I(\textbf{Z};\textbf{M}_{s}^w) + H(\bar{\textbf{Z}}|\textbf{Z},\textbf{M}_{s}^w) - H(\bar{\textbf{Z}}|\textbf{M}_{s}^{k-w},\textbf{Z},\textbf{M}_{s}^w)+w
\nonumber\\
&= I(\textbf{M}_{s}^w;\textbf{Z}|\textbf{M}_{s}^{k-w}) - I(\textbf{Z};\textbf{M}_{s}^w) + H(\bar{\textbf{Z}}|\textbf{Z},\textbf{M}_{s}^w) +w
\nonumber\\
&= H(\textbf{M}_{s}^w|\textbf{M}_{s}^{k-w})- H(\textbf{M}_{s}^w|\textbf{Z}, \textbf{M}_{s}^{k-w})-  H(\textbf{M}_{s}^w) + H(\textbf{M}_{s}^w|\textbf{Z}) + H(\bar{\textbf{Z}}|\textbf{Z},\textbf{M}_{s}^w) + H(\bar{\textbf{Z}})- H(\bar{\textbf{Z}})+w
\nonumber\\
&=I(\textbf{M}_{s}^w ;\textbf{M}_{s}^{k-w}|\textbf{Z}) -I(\bar{\textbf{Z}}; \textbf{Z},\textbf{M}_{s}^w)+ H(\bar{\textbf{Z}})+w
\nonumber\\
&= I(\textbf{M}_{s}^w ;\textbf{M}_{s}^{k-w}|\textbf{Z}) -I(\bar{\textbf{Z}}; \textbf{M}_{s}^w |\textbf{Z}) - I(\bar{\textbf{Z}}; \textbf{Z})+ H(\bar{\textbf{Z}})+w
\nonumber\\
& \leq  I(\textbf{M}_{s}^w ;\textbf{M}_{s}^{k-w}|\textbf{Z}) -I(\bar{\textbf{Z}}; \textbf{M}_{s}^w |\textbf{Z}) + H(\bar{\textbf{Z}})+w
\nonumber\\
& = H(\textbf{M}_{s}^w|\textbf{Z}) -H(\textbf{M}_{s}^w | \textbf{Z} ,\textbf{M}_{s}^{k-w}) - H(\textbf{M}_{s}^w|\textbf{Z})+H(\textbf{M}_{s}^w | \textbf{Z} ,\bar{\textbf{Z}})+ H(\bar{\textbf{Z}})+w
\nonumber\\
&\stackrel{(d)}{\leq} H(\bar{\textbf{Z}})+w
\nonumber\\
& \stackrel{(e)}{\leq} \rho(s_i;d_i)-\rho(s_i;z)+w, \nonumber
\end{align}
\fi
where (a) is since conditioning reduces entropy. Note that this inequality is tight if the messages are independent. (b) is due to the reliability constraint, that is,  $H(\textbf{M}_{s}^{k-w}|\textbf{Y}_d) = 0$ since all messages, and $\textbf{M}_{s}^{k-w}$ specifically, are decodable at the destination using $\textbf{Y}_d = (\textbf{Z},\bar{\textbf{Z}})$. (c) follows since we assume that Eve is kept ignorant regarding any group of $w-k$ messages, hence $I(\textbf{M}_{s}^{k-w};\textbf{Z})=0$. That is, it uses the security constraint. (d) is since $H(\textbf{M}_{s}^w|\textbf{Z},\bar{\textbf{Z}})=0$. However, since the converse turns out to be tight, and it turns out that removing the positive term $H(\textbf{M}_{s}^w|\textbf{Z},\textbf{M}_{s}^{k-w})$ does not change much, we conclude that $H(\textbf{M}_{s}^w|\textbf{Z},\textbf{M}_{s}^{k-w})$ is negligible, meaning, \emph{given the messages Eve is interested in}, and her captured links, she is actually able to decode the rest of the messages/randomness as well. This is a returning theme in such wiretap-like coding schemes. (e) follows since $\rho(s_i;d_i)-\rho(s_i;z)$ is the maximum amount that may not be available to Eve, if she has a min-cut $\rho(s_i;z)$.
Again, we assume unit capacity links and normalize the information in a message to $"1"$ accordingly.

\section{Linear Code Construction for Individual-SMSM}\label{LinearCodes}
The code given in \Cref{SecureGossipAlgorithmMultiplex} relies on random coding, for which the individual secrecy constraint holds only asymptotically, that is, $H(M_{s,j}|\textbf{Z}_{w})/H(M_{s,j})\rightarrow 1$ as $k$ grows. In this section, we provide a structured linear code, which results in zero mutual information, not constrained by the number of messages (i.e., no requirement for $k$ to grow), without any decrease in the rate or any message ``blow-up" by extra randomness. Yet, the code requires a field size greater than or equal to $q^{u}$ at the sources and destinations, where $u=c/\log_2(q)\geq k$. Note that this means we regard packets as symbols over a finite field $F_{q^u}$, where $F_{q}^{1\times u} \cong F_{q^u}$. Hence, the code is compatible with the network code given in \Cref{SecureGossipAlgorithmMultiplex} since $F_{q^u}$ is a vector space over $F_{q}$. Furthermore, note that the increased alphabet size is only at the sources and destinations. The network code at intermediate nodes can remain small. The structured linear code provided in this section adopts the linear code given in \cite{bhattad2005weakly,silva2009universal} for single-source multicast to multi-source multicast. Even though most of the results and proofs in this subsection follow the techniques given in \cite{silva2011universal}, for completeness we provide the adapted proofs herein, adopting the scope, terminology and notations to the multi-source multicast problem considered in this study. Note, however, that in the information leakage proof, Eve's observation $\textbf{Z}_w$ may include a linear combination of packets from several sources,  rather than only one source as in \cite{silva2009universal}, hance maybe confusing Eve even further. We do not exploit this confusion here, and the proof holds even for the case where Eve may see $w$ packets unmixed with other sources, yet it creates a difference compared to single-source multicast. Moreover, if one could \emph{guarantee} Eve's observation include mixtures of packets from several sources, the field size of the code may have been decreased. The current literature includes several examples analyzing the security achieved compared to the level of mixing in the network \cite{lima2007random}.
\begin{corollary}
With a $(k,w)$ linear code over a field $F_{q^u}$, \emph{$k_s$-individual security} holds in SMSM networks over a field $F_{q}$, keeping an eavesdropper which observes $w \geq \rho(s;z)$ links ignorant with respect to any set of $k_s \leq k-w$ messages, if $u \geq k$, form each source $s\in\mathcal{S}$ to each destination $d\in\mathcal{D}$, $\rho(s,d) \geq k$, for all $d\in\mathcal{D}$, $\rho(S,d) \geq (k)|S|$ and $k$ satisfies
\begin{equation*}
  k\geq \Bigg\lceil \frac{\rho(s,d)}{\rho(s,d)-\rho(s;z)}\Bigg\rceil\geq 2.
\end{equation*}
\end{corollary}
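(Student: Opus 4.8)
The plan is to replace the random binning of \Cref{SecureGossipAlgorithmMultiplex} by an explicit deterministic linear encoder and then re-run the leakage argument of \Cref{L_To_Eve}, upgrading ``asymptotically equiprobable'' to ``exactly equiprobable''. Concretely, I would treat each of the $k$ messages $\vec{M}_{s,j}$ (a length-$c$ packet) as a symbol of the packet field $\mathbb{F}_{2^c}$, and encode each column of $\textbf{M}_s$ by a fixed $k\times k$ matrix $G$, i.e. $\textbf{X}_s = G\textbf{M}_s$. Since arithmetic in $\mathbb{F}_{2^c}$ is $\mathbb{F}_2$-linear on the $c$ bits, this is a binary linear code in the sense of the corollary, and the field used by the network code itself is left untouched, which is what supports the ``no increase in field size'' claim. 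The matrix $G$ is chosen so that every square submatrix, and in particular every $w\times w$ submatrix, is nonsingular; a Cauchy matrix over $\mathbb{F}_{2^c}$ is the canonical example.

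For reliability I would invoke \Cref{direct theorem1} essentially verbatim: the conditions $\rho(s,d)\ge k$ and $\rho(\mathcal{S},d)\ge k|S|$ already guarantee that a non-secure MSM/RLNC code delivers all $k$ rows of $\textbf{X}_s$ to every destination, and since $G$ is invertible each destination recovers $\textbf{M}_s = G^{-1}\textbf{X}_s$. Nothing in the source map alters the min-cut conditions, so reliability is inherited for free.

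The security step is the linear analog of the shell/ball picture of \Cref{L_To_Eve}, but now exact. As there, I reduce Eve to at most $w$ recovered rows of $\textbf{X}_s$ (using $\rho(s;z)\le w$ and Gaussian elimination), i.e. $\textbf{Z}_w = G[R,:]\,\textbf{M}_s$ with $|R|=w$, and I treat each column separately since the columns are encoded independently. For a protected set $J$ of $k_s\le k-w$ messages, the pair $(\textbf{M}_s^{J},\textbf{Z}_w)$ is a linear image of the uniform vector $\textbf{M}_s$, so the two blocks are independent and uniform exactly when the rows $\{e_j\}_{j\in J}\cup\{G_r\}_{r\in R}$ are linearly independent. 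Because $|J|+|R|=k_s+w\le k$, this independence is equivalent to a $w\times w$ submatrix of $G$ taken from columns outside $J$ being nonsingular, which is precisely the chosen property of $G$. Hence every value of $\textbf{M}_s^{J}$ is equally likely given $\textbf{Z}_w$, giving $I(\textbf{M}_s^{k-w};\textbf{Z}_w)=0$ and in particular $I(M_{s,j};\textbf{Z}_w)=0$, with no $k\to\infty$ limit.

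The main obstacle is the existence of such a totally-nonsingular $G$ of the required size and the precise role of the displayed lower bound on $k$. I would show that the bound is exactly the feasibility condition: rearranging $k\ge\lceil \rho(s,d)/(\rho(s,d)-\rho(s;z))\rceil$ gives $\rho(s;z)\le \rho(s,d)(1-1/k)$, i.e. Eve observes at most a $(k-1)/k$ fraction of the $\rho(s,d)$ paths reaching a destination, leaving at least a $1/k$ secure fraction, which is what the encoder needs in order to mask each message by at least one coordinate Eve does not see. Over the packet field $\mathbb{F}_{2^c}$ a Cauchy $G$ exists whenever $2^c$ exceeds a small multiple of $k$, which holds for packets of non-negligible length; the delicate point, and the reason the statement is phrased for $k\ge 2$, is that the masking room $1/k$ collapses at the boundary, so I expect the bound on $k$ to be not only sufficient but tight for the binary construction.
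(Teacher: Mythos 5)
Your construction is sound on its own terms, but it is a genuinely different route from the paper's. The paper (\Cref{LinearCodes}) stays strictly binary and works bit-column by bit-column: it takes a $(k,w)$ binary code $\mathcal{C}$ with parity-check matrix $\textbf{H}\in\{0,1\}^{(k-w)\times k}$, lets the $k-w$ ``protected'' bits of each column select a coset of $\mathcal{C}$ (a syndrome, \`a la Ozarow--Wyner wiretap channel II) and the remaining $w$ bits select the word inside the coset, i.e.\ $\textbf{X}_s(i)^T=\textbf{M}_s(i)^T[\textbf{G}^\star;\textbf{G}]$ (\Cref{linear encoding prop}); security is then reduced in \Cref{lemma2} to the condition that every $w$ columns of the binary generator matrix $\textbf{G}$ have rank $w$, which makes all $2^{k-w}$ cosets equally consistent with Eve's $w$ observed symbols. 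You instead lift the messages to the packet field $\mathbb{F}_{2^c}$ and precode with a single totally-nonsingular (Cauchy) $k\times k$ matrix, and your linear-independence argument for $\{e_j\}_{j\in J}\cup\{G_r\}_{r\in R}$ is the correct analogue of the paper's coset-counting. What your version buys is an existence guarantee: the paper's condition that \emph{every} $w$ columns of a $w\times k$ binary generator be independent is an MDS-type requirement that binary codes satisfy only in degenerate cases ($w\in\{1,k-1\}$, essentially), whereas a Cauchy matrix exists whenever $2^c\gtrsim 2k$, which is harmless for realistic packet lengths. What it costs is fidelity to the statement: your code is $\mathbb{F}_{2^c}$-linear, not a ``binary linear code'' in the sense the corollary and \Cref{lemma2} intend, and it couples the construction to the packet length $c$, which the paper's per-bit-column scheme deliberately avoids.

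One concrete gap to flag: neither your argument nor your reading of the displayed condition actually establishes the role of $k\geq\lceil\rho(s,d)/(\rho(s,d)-\rho(s;z))\rceil$. Your rearrangement to $\rho(s;z)\le\rho(s,d)(1-1/k)$ is a plausible feasibility gloss, but your construction never uses it (total nonsingularity of $G$ does all the work once $k_s+w\le k$), and the claim that the bound is ``tight for the binary construction'' is asserted, not proved. If you keep the $\mathbb{F}_{2^c}$ route, you should either derive where this inequality is needed or state explicitly that your proof establishes the conclusion under the weaker hypothesis $w\le k-k_s$ alone.
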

We may now turn to the detailed construction and proof of the Individual-SMSM structured linear code.
\subsubsection{Codebook Generation}
Let $\mathcal{C}$ be a linear code over $F_{q^u}$ of length $k$ and dimension $w$, and set $k^{\prime} = k-w$. Then, let
\[
\textbf{H} = [\vec{H}_1; \vec{H}_2; \ldots; \vec{H}_{k^{\prime}}] \in F_{q^u}^{k^{\prime}\times k}
\]
be a parity check matrix for the code. This linear code defines $q^{u(k-w)}$ cosets, one of them is the code itself. We denote the cosets by $\{A_m\}$, $1\leq m \leq q^{u(k-w)}$. Note that each coset is of size $q^{uw}$. Hence, the cosets of this code correspond to the bins we used in \Cref{SecureGossipAlgorithmMultiplex}, yet over a field $F_{q^u}$, such that there are many more cosets, and each is larger.

Let $\textbf{G}$ be a generator matrix for $\mathcal{C}$. We thus denote
\[
\textbf{G} = [\vec{G}_1; \vec{G}_2; \ldots; \vec{G}_w] \in F_{q^u}^{w \times k},
\]
and we select a matrix
\[
\textbf{G}^{\star} = [\vec{G}^{\star}_{1}; \vec{G}^{\star}_{2}; \ldots; \vec{G}^{\star}_{k^{\prime}}]  \in F_{q^u}^{k^{\prime} \times k},
\]
with $k^{\prime}$ linearly independent rows from $F_{q^u}^{k}\setminus \mathcal{C}$. That is, $\textbf{G}^{\star}$ spans the null space of $\mathcal{C}$.

The linear code we consider herein is from the class of linear Maximum Rank Distance (MRD) codes \cite{gabidulin1985theory, roth1991maximum}.
The norm of a vector $\textbf{X}_{s} \in F_{q^u}^{k}$ is defined as the column rank of $\textbf{X}_{s}$ over $F_{q}$, denoted by $rank_{F_{q}}(\textbf{X}_{s})$.  The \emph{rank distance} of two vectors over $F_{q^u}^{k}$ is defined as
\[
d_R(\textbf{X}_{s}(1),\textbf{X}_{s}(2))\triangleq rank_{F_{q}}(\textbf{X}_{s}(1)-\textbf{X}_{s}(2)).
\]
The \emph{minimum rank distance} of a code $\mathcal{C}\subseteq F_{q^u}^{k}$ is defined as the minimum distance of all pairs of distinct codewords in $\mathcal{C}$. Rank metric codes adhere to a Singleton bound, that is, the size of a code is bounded by $|\mathcal{C}|\leq q^{\max\{k,u\}(\min\{k,u\}-d+1)}$. For linear codes this becomes $d \leq \min\{1,u/k\}(k-k^{\prime})+1$. We use the requirements in \cite[Theorem 7]{silva2009universal}, that is, for $u\geq k$, we use a code over $F_{q^u}$, defined by a parity check matrix $\textbf{H}\in F_{q^u}^{k^{\prime}\times k}$, which is MRD and the matrix $\textbf{P}\textbf{H}\textbf{T}$ is nonsingular for all full rank $\textbf{P}\in F_{q}^{k^{\prime}\times k^{\prime}}$ and all full rank $\textbf{T}\in F_{q}^{k\times k^{\prime}}$.
\subsubsection{Source and legitimate nodes encodings}
At each source node, $s$, the encoder selects a codeword $x^{k}(e)$  out of the $q^{uw}$ members of the coset $A_m$, where $m$ is given by the index $M_{s,1};\ldots;M_{s,k^{\prime}}$ and $e=M_{s,k^{\prime}+1};\ldots;M_{s,k}$ over a field $F_{q^u}^k$. That is, similar to \Cref{SecureGossipAlgorithmMultiplex}, $k^{\prime}=k-w$ symbols choose the coset, and the remaining $w$ symbols choose the codeword within the coset. This is equivalent to letting $\textbf{X}_s$ be a choice from the $q^{uw}$ solutions of
\begin{equation}\label{eq:pro2_1}
(M_{s,1};\ldots;M_{s,k^{\prime}}) = \textbf{H}\textbf{X}_{s}.
\end{equation}
Again, note that $\textbf{X}_s$ is of the same size as $\textbf{M}_s$. \Cref{linear encoding prop} below shows that, in fact, $\textbf{X}_s$ can be easily computed using matrix multiplication.
\begin{proposition}\label{linear encoding prop}
At each source node, $s$, the encoding operation for the symbols $M_{s,1};\ldots;M_{s,k}\triangleq \textbf{M}_{s}$, is given by
\ifdouble
\begin{eqnarray}\label{eq:pro2_2}
\textbf{X}_{s}^T &=& M_{s,1}\vec{G}^{\star}_{1} + \ldots + M_{s,k^{\prime}}\vec{G}^{\star}_{k^{\prime}} \nonumber\\
&&+ M_{s,k^{\prime}+1}\vec{G}_1 + \ldots + M_{s,k}\vec{G}_w \nonumber\\
&=& \textbf{M}_{s}^T \left[\begin{array}{c} \textbf{G}^{\star} \\ \textbf{G} \end{array}\right].
\end{eqnarray}
\else
\begin{equation}\label{eq:pro2_2}
\textbf{X}_{s}^T = M_{s,1}\vec{G}^{\star}_{1} + \ldots + M_{s,k^{\prime}}\vec{G}^{\star}_{k^{\prime}} + M_{s,k^{\prime}+1}\vec{G}_1 + \ldots + M_{s,k}\vec{G}_w = \textbf{M}_{s}^T \left[\begin{array}{c} \textbf{G}^{\star} \\ \textbf{G} \end{array}\right].
\end{equation}
\fi
\end{proposition}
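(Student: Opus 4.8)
The plan is to verify the identity in \eqref{eq:pro2_2} by splitting each message column into its two functional blocks and checking, block by block, that the matrix product reproduces exactly the coset-based selection described around \eqref{eq:pro2_1}. Write $\textbf{M}_s(i)^T = (a,b)$, where the row vector $a = (M_{s,1}(i),\ldots,M_{s,k^{\prime}}(i))$ holds the $k^{\prime}$ bits that choose the coset and $b = (M_{s,k^{\prime}+1}(i),\ldots,M_{s,k}(i))$ holds the $w$ bits that choose the codeword inside it. Then the right-hand side of \eqref{eq:pro2_2} decomposes as $\textbf{X}_s(i)^T = a\textbf{G}^{\star} + b\textbf{G}$, and I would argue that the two summands are precisely the coset-representative term and the within-coset codeword term.

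First I would dispose of the codeword term. Since $\textbf{G}$ is a generator matrix of $\mathcal{C}$ with full row rank $w$, the map $b \mapsto b\textbf{G}$ is a bijection from $\{0,1\}^{w}$ onto $\mathcal{C}$; hence as $b = e(i)$ ranges over its $2^{w}$ values, $b\textbf{G}$ enumerates all $2^{w}$ members of $\mathcal{C}$, which is exactly the index within a coset. Because $\textbf{H}$ is the parity-check matrix of $\mathcal{C}$, we have $\textbf{H}\textbf{G}^{T}=0$, so this term contributes nothing to the syndrome and only selects which element of the target coset is produced.

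Next I would handle the coset term and confirm the syndrome equation \eqref{eq:pro2_1}. Applying $\textbf{H}$ to $\textbf{X}_s(i) = \textbf{G}^{\star T}a^{T} + \textbf{G}^{T}b^{T}$ and using $\textbf{H}\textbf{G}^{T}=0$ leaves $\textbf{H}\textbf{X}_s(i) = (\textbf{H}\textbf{G}^{\star T})\,a^{T}$. The crux is that the $k^{\prime}$ rows of $\textbf{G}^{\star}$, being linearly independent vectors outside $\mathcal{C}$ that together with $\mathcal{C}$ span $\{0,1\}^{k}$, make $\textbf{H}\textbf{G}^{\star T}$ a $k^{\prime}\times k^{\prime}$ invertible matrix: the linear map $x\mapsto\textbf{H}x$ has kernel exactly $\mathcal{C}$, so its restriction to the complement spanned by $\textbf{G}^{\star}$ is an isomorphism onto $\{0,1\}^{k^{\prime}}$. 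Replacing $\textbf{G}^{\star}$ by $(\textbf{H}\textbf{G}^{\star T})^{-1}\textbf{G}^{\star}$ (which preserves both the spanned complement and the linear independence of the rows) normalizes $\textbf{H}\textbf{G}^{\star T}=I_{k^{\prime}}$, so that $\textbf{H}\textbf{X}_s(i)=a^{T}$, matching \eqref{eq:pro2_1}. Thus $a$ indexes the coset $A_m$ and the codeword term picks its $e(i)$-th member, so the product equals the row-wise linear combination displayed in \eqref{eq:pro2_2}.

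Finally I would record that the encoding is well defined and invertible: the matrix obtained by stacking $\textbf{G}^{\star}$ over $\textbf{G}$ has $k^{\prime}+w=k$ rows spanning $\{0,1\}^{k}$, hence is $k\times k$ and nonsingular, so $(a,b)\mapsto\textbf{X}_s(i)$ is a bijection on $\{0,1\}^{k}$, which is exactly the $1\!:\!1$ property used for reliability in \Cref{SecureGossipAlgorithmMultiplexReliability}. I expect the only delicate point to be the normalization guaranteeing $\textbf{H}\textbf{G}^{\star T}=I_{k^{\prime}}$; once that is settled, the remainder is a direct computation resting on $\textbf{H}\textbf{G}^{T}=0$ and the full-rank assumptions on $\textbf{H}$ and $\textbf{G}$.
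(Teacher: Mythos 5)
Your proof is correct and follows essentially the same route as the paper: apply $\textbf{H}$ to the claimed expression, use $\textbf{H}\textbf{G}^{T}=0$ together with a choice of $\textbf{G}^{\star}$ normalized so that $\textbf{H}\textbf{G}^{\star T}=I_{k^{\prime}}$ to recover the syndrome equation \eqref{eq:pro2_1}, and invoke the rank $w$ of $\textbf{G}$ to see that the last $w$ bits select distinct codewords within the coset; you merely make explicit the invertibility of $\textbf{H}\textbf{G}^{\star T}$ that the paper takes for granted. One cosmetic slip: the normalizing replacement should be $\textbf{G}^{\star}\mapsto\bigl((\textbf{H}\textbf{G}^{\star T})^{T}\bigr)^{-1}\textbf{G}^{\star}$ (a transpose is missing in your version), but this does not affect the substance of the argument.
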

\begin{proof}
Define $\textbf{X}_{s}^{T}$ according to \eqref{eq:pro2_2}. We wish to show that this definition is indeed consistent with \eqref{eq:pro2_1}, that is, using the definition in \eqref{eq:pro2_2} the symbols  $M_{s,1};\ldots;M_{s,k^{\prime}}$ define the coset in which $\textbf{X}_{s}$ resides, and, furthermore, the remaining $w$ symbols, $M_{s,k^{\prime}+1};\ldots;M_{s,k}$, uniquely define the word within the coset.

To this end, take the transposed of equation \eqref{eq:pro2_2}, and multiply it by $\textbf{H}$. We have:
\ifdouble
\begin{eqnarray*}\label{eq:pro2_3}
\textbf{H}\textbf{X}_{s} &=& \textbf{H}M_{s,1}(\vec{G}^{\star}_{1})^{T} + \ldots + \textbf{H}M_{s,k^{\prime}}(\vec{G}^{\star}_{k^{\prime}})^{T} \nonumber\\
&&+ \textbf{H}M_{s,k^{\prime}+1}(\vec{G}_{1})^{T} + \ldots + \textbf{H}M_{s,k}(\vec{G}_{w})^T \nonumber\\
&=& \textbf{M}_{s,1} \left(\begin{array}{c} 1 \\ 0 \\ \vdots \\0 \end{array}\right) + \ldots +\textbf{M}_{s,k^{\prime}} \left(\begin{array}{c} 0\\ 0 \\ \vdots \\1 \end{array}\right)\nonumber\\
&&\qquad+ M_{s,k^{\prime}+1} \underline{0} + \ldots + M_{s,k}\underline{0} \nonumber\\
&=& \left(\begin{array}{c} \textbf{M}_{s,1} \\ \vdots \\ \textbf{M}_{s,k^{\prime}}  \end{array}\right),
\end{eqnarray*}
\else
\begin{eqnarray*}\label{eq:pro2_3}
\textbf{H}\textbf{X}_{s} &=& \textbf{H}M_{s,1}(\vec{G}^{\star}_{1})^{T} + \ldots + \textbf{H}M_{s,k^{\prime}}(\vec{G}^{\star}_{k^{\prime}})^{T} + \textbf{H}M_{s,k^{\prime}+1}(\vec{G}_{1})^{T} + \ldots + \textbf{H}M_{s,k}(\vec{G}_{w})^T \nonumber\\
&=& \textbf{M}_{s,1} \left(\begin{array}{c} 1 \\ 0 \\ \vdots \\0 \end{array}\right) + \ldots +\textbf{M}_{s,k^{\prime}} \left(\begin{array}{c} 0\\ 0 \\ \vdots \\1 \end{array}\right) + M_{s,k^{\prime}+1} \underline{0} + \ldots + M_{s,k}\underline{0} \nonumber\\
&=& \left(\begin{array}{c} \textbf{M}_{s,1} \\ \vdots \\ \textbf{M}_{s,k^{\prime}}  \end{array}\right),
\end{eqnarray*}
\fi
where the second inequality is since the row vectors $\vec{G}^{\star}$ are our choice of a basis to the null space of the code, hence, we can take $\textbf{G}^{\star}$ such that $\textbf{H}\textbf{G}^{\star T}=\textbf{I}$ \cite[Section V.C]{silva2011universal}; Moreover, since $\textbf{H}$ is a parity check matrix for the code, it is orthogonal to all codewords. Thus, the first $k^{\prime}$ rows define the coset.

Now, since the rows $M_{s,k^{\prime}+1};\ldots;M_{s,k}$ create a linear combination of \emph{codewords}, the addition of such a linear combination does not change the coset. $\textbf{X}_{s}^{T}$  remains in the same coset regardless of these rows. Yet, as ($\vec{G}_1; \ldots; \vec{G}_w)$ is of rank $w$, all $q^{uw}$ possibilities for the linear combination are distinct, creating distinct vectors $\textbf{X}_{s}^{T}$ within the coset.
\end{proof}
Then, since each symbol over $F_{q^u}$ of the encoded codeword $X_{s,1};\ldots;X_{s,k}\triangleq \textbf{X}_{s}$ at the source is a vector over $F_{q}$ of length $c=u/\log_2(q)$, the network code is still compatible with \Cref{SecureGossipAlgorithmMultiplex}. That is, the sources transmit linear combinations of the rows, with random coefficients. Nodes transmit random linear combinations of the messages they received.
\subsection{Reliability}
As for the reconstruction of $\textbf{X}_{s}$ at the destinations, it is almost a direct consequence of \cite{ho2006random}. Again, the min-cut is given by \Cref{direct theorem1}, and the legitimate nodes can easily reconstruct $\textbf{X}_s$ for each $s$. Now, each destination can map $\textbf{X}_s$ back to $\textbf{M}_s$.
First, compute the bin index according to $(M_{s,1};\ldots;M_{s,k^{\prime}}) = \textbf{H}\textbf{X}_{s}$. Then, $M_{s,k^{\prime}+1};\ldots;M_{s,k}$ are simply the index of $\textbf{X}_{s}$ within that bin. They can be computed using $(M_{s,k^{\prime}+1};\ldots;M_{s,k}) = \tilde{\textbf{G}}\textbf{X}_{s}$, where $\tilde{\textbf{G}}$ is any basis for the code such that $\tilde{\textbf{G}}\textbf{G}^{\star}=0$ yet $\tilde{\textbf{G}}\textbf{G}=I$.\footnote{In the same way, using a gossip protocol, the reliability proof is almost a direct consequence of \cite[Theorem 1]{cohen2015network}. Hence, the number of rounds required is given by \Cref{direct theorem3}.}
\subsection{Information Leakage at the Eavesdropper}\label{linear-leakage}
Denoted by $\mathcal{C}$ the code and by $\textbf{X}_s$ the codeword corresponding to $\vec{M}_{s,1}\ldots \vec{M}_{s,k}$. We assume that the eavesdropper has full knowledge of the code $\mathcal{C}$. As given in \Cref{L_To_Eve}, to analyze the information leakage at the eavesdropper, note that Eve has access to at most $w$ linear combinations on the elements of $\textbf{X}_s$.

Next, using techniques given in \cite{silva2009universal,silva2011universal}, we calculate the eavesdropper's uncertainty.
Let $\underline{\textbf{M}}_{s}^{k-w}=\textbf{P}\textbf{M}_{s}^{k-w}$ for a full rank $\textbf{P}\in F_{q}^{k^{\prime}\times k^{\prime}}$. Let $\textbf{Z}_w = \textbf{W}\textbf{X}_{s}$, where $\textbf{W}$ is an arbitrary encoding matrix due to network links observed by the eavesdropper. Then, we have
\ifdouble
\begin{eqnarray*}
  I(\underline{\textbf{M}}_{s}^{k-w};\textbf{Z}_w) &=&  I(\underline{\textbf{M}}_{s}^{k-w},\textbf{X}_{s};\textbf{Z}_w)-I(\textbf{X}_{s};\textbf{Z}_w|\underline{\textbf{M}}_{s}^{k-w})\\
  &\stackrel{(a)}{\leq}& H(\textbf{Z}_w) - H(\textbf{X}_{s}|\underline{\textbf{M}}_{s}^{k-w})\\
  && \hspace{2.6cm} + H(\textbf{X}_{s}|\underline{\textbf{M}}_{s}^{k-w},\textbf{Z}_w) \\
  &\leq& rank (\textbf{W})+ rank (\textbf{PH}) - rank\left[\begin{array}{c} \textbf{PH} \\ \textbf{W} \end{array}\right],
\end{eqnarray*}
\else
\begin{eqnarray*}
  I(\underline{\textbf{M}}_{s}^{k-w};\textbf{Z}_w) &=&  I(\underline{\textbf{M}}_{s}^{k-w},\textbf{X}_{s};\textbf{Z}_w)-I(\textbf{X}_{s};\textbf{Z}_w|\underline{\textbf{M}}_{s}^{k-w})\\
  &\stackrel{(a)}{\leq}& H(\textbf{Z}_w) - H(\textbf{X}_{s}|\underline{\textbf{M}}_{s}^{k-w})+ H(\textbf{X}_{s}|\underline{\textbf{M}}_{s}^{k-w},\textbf{Z}_w) \\
  &\leq& rank (\textbf{W})+ rank (\textbf{PH}) - rank\left[\begin{array}{c} \textbf{PH} \\ \textbf{W} \end{array}\right],
\end{eqnarray*}
\fi
where (a) is since given the network coefficients and $\textbf{X}_{s}$, $H(\textbf{Z}_w|\underline{\textbf{M}}_{s}^{k-w},\textbf{X}_{s})\geq 0$, and the last inequality follows directly from the proof of Lemma 6 in \cite{silva2011universal}. Note that the above inequality, $H(\textbf{Z}_w|\underline{\textbf{M}}_{s}^{k-w},\textbf{X}_{s})\geq 0$, is a key difference compared to single source multicast. In single source multicast, $H(\textbf{Z}_w|\underline{\textbf{M}}_{s}^{k-w},\textbf{X}_{s}) = 0$, as the output at Eve's side is determined by the source input. However, in MSM this  entropy might be positive, if Eve's observation includes a mixture with other sources and not only the source $s$.

Now, if $\langle\cdot\rangle$ denotes the row space of a matrix, we have
\[
 rank (\textbf{PH}) + rank (\textbf{W}) - rank \left[\begin{array}{c} \textbf{PH} \\ \textbf{W} \end{array}\right] = dim(\langle\textbf{PH}\rangle\cap\langle\textbf{W}\rangle).
\]
Let $r =  dim(\langle\textbf{PH}\rangle\cap\langle\textbf{W}\rangle)$. Then, there exist full rank matrices $\textbf{R}_1\in F_{q^u}^{r \times w}$ and $\textbf{R}_2\in F_{q^u}^{r \times k^{\prime}}$, such that $\textbf{R}_1\textbf{W}=\textbf{R}_2\textbf{PH}$ and $rank (\textbf{R}_2\textbf{PH})=r$, and we have $\textbf{R}_1\textbf{Z}_w = \textbf{R}_1\textbf{W}\textbf{X}_{s}=\textbf{R}_2\textbf{PH}\textbf{X}_{s}=\textbf{R}_2\underline{\textbf{M}}_{s}^{k-w}$.
On the other hand, the mutual information is at least the linear common part, that is, $I(\underline{\textbf{M}}_{s}^{k-w};\textbf{Z}_w) \geq H(\textbf{R}_2\underline{\textbf{M}}_{s}^{k-w})$. By the uniformity of the messages and the full ranks of $\textbf{P}$ and $\textbf{R}_2$, $H(\textbf{R}_2\underline{\textbf{M}}_{s}^{k-w})=r$, hence
\[
I(\underline{\textbf{M}}_{s}^{k-w};\textbf{Z}_w) \geq rank (\textbf{PH}) + rank (\textbf{W}) - rank \left[\begin{array}{c} \textbf{PH} \\ \textbf{W} \end{array}\right].
\]
To conclude,
\[
I(\underline{\textbf{M}}_{s}^{k-w},\textbf{Z}_w)= rank (\textbf{PH}) + rank (\textbf{W}) - rank \left[\begin{array}{c} \textbf{PH} \\ \textbf{W} \end{array}\right].
\]
Now, if $\mathcal{C}$ is an MRD code over a field of $F_{q^u}^{k}, u\geq k$, for any full rank $\textbf{P}\in F_{q}^{k^{\prime}\times k^{\prime}}$ the matrix $\left[\begin{array}{c} \textbf{PH} \\ \textbf{W} \end{array}\right]$ is nonsingular for any full rank $\textbf{W}\in F_{q}^{w \times k}$. Thus, $rank\left[\begin{array}{c} \textbf{PH} \\ \textbf{W} \end{array}\right] =rank (\textbf{PH})+rank (\textbf{W})$, and the mutual information is zero.

\section{Code Construction and a Proof for Strong-SMSM}\label{strong-SMSM}
In this section, we design a random code, which results with strong-secrecy, i.e., requiring Eve's mutual information \emph{with all messages simultaneously} to be zero, yet, at price of rate as given in \cite{cai2011secure}.
However, using the suggested random code herein, the field size is determined only by the network coding scheme, that is, only by
the requirement for reliability, and is not increased by the strong-security constraints.

At each source node $s\in\{1,\ldots,|S|\}$, we \emph{randomly} map each column of the message matrix $\textbf{M}_s$.
As depicted in \Cref{fig:WiretapCoding}, in the code construction phase, for each \emph{possible column} of the $s$-th message matrix we generate a bin, containing several columns. The number of such columns \emph{corresponds} to $w$, the number of packets that the eavesdropper can wiretap, in a relation that will be made formal in the sequel. Then, to encode, for each column of the message matrix, we randomly select a column from its corresponding bin. This way, a new, $n\times c$ message matrix $\textbf{X}_s$ is created.
\ifdouble
\begin{figure}
  \centering
  \includegraphics[trim= 0cm 0cm 0cm 0cm,clip,scale=1]{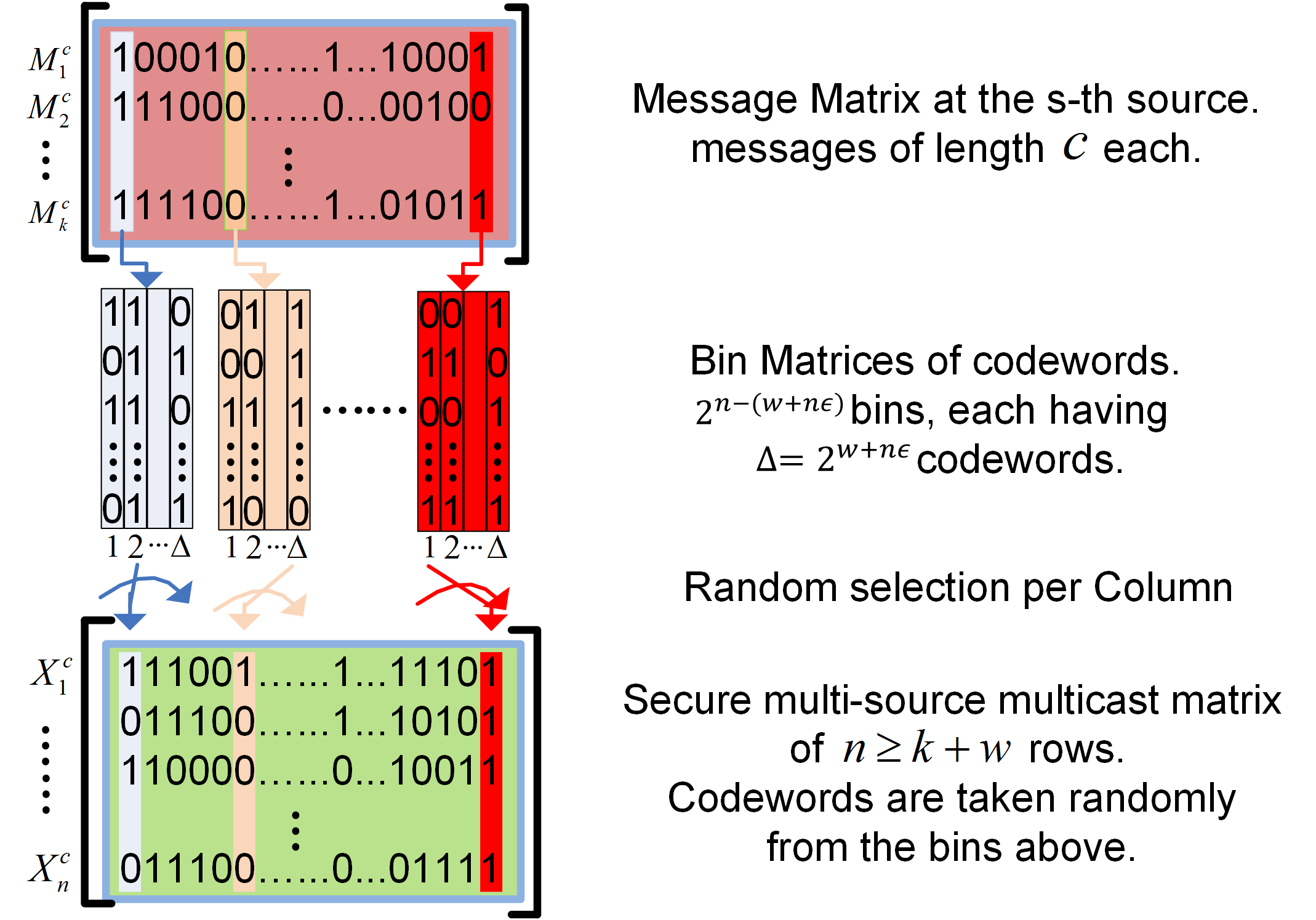}
  \caption{Binning and source encoding process for Strong-SMSM.}
  \label{fig:WiretapCoding}
\end{figure}
\else
\begin{figure}
  \centering
  \includegraphics[trim= 0cm 0cm 0cm 0cm,clip,scale=1.1]{Wiretap_coding8_one_colM.png}
  \caption{Binning and source encoding process for Strong-SMSM.}
  \label{fig:WiretapCoding}
\end{figure}
\fi
Specifically, a Strong-SMSM code at the $s$-th source node consists of a messages matrix $\textbf{M}_s$ of $\vec{M}_{s,1}\ldots \vec{M}_{s,k}$ messages of length $c$ bits over the binary field, we denote the set of matrices by $\mathcal{M}_s$; A discrete memoryless source of randomness over the alphabet $\mathcal{R}$ and some known statistics $p_R$; An encoder,
\begin{equation*}
f : \mathcal{M}_s \times \mathcal{R} \rightarrow \mathcal{X}_s\in\{0,1\}^{n\times c}
\end{equation*}
which maps each message matrix $\textbf{M}_s$ to a matrix $\textbf{X}_s$ of codewords.
This message matrix contains $n\geq k+w+n\epsilon$ new messages of size $c$, where, here as well, $n\epsilon\geq 1$ is a small integer.

The need for a \emph{stochastic encoder} is similar to most encoders ensuring information theoretic security, as randomness is required to confuse the eavesdropper about the actual information \cite{C13}. Hence, we define by $R_k$ the random variable encompassing the randomness required \emph{for the $k$ messages at the source node}, and by $\Delta$ the number of columns in each bin.
We may now turn to the detailed construction and analysis.
\subsubsection{Codebook Generation}
Set $\Delta = 2^{w+n\epsilon}$. Where $P(x)\sim Bernoulli(1/2)$, using a distribution $P(X^n)=\prod^{n}_{j=1}P(x_j)$, for each possible column in the message matrix generate $\Delta$ independent and identically distributed codewords $x^{n}(e)$, $1 \leq e \leq \Delta$.
\subsubsection{Source and legitimate node encodings}
For each column $i$ of the $s$-th message matrix $\textbf{M}_s$, the $s$-th source node selects uniformly at random one codeword $x^{n}(e)$ from the $i$-th bin.
Therefore, the $s$-th source Strong-SMSM matrix $\textbf{X}_s$ contains $c$ randomly selected codewords of length $n$, one for each column of the $s$-th message matrix.
Then, the sources transmit linear combinations of the rows, with random coefficients. Nodes transmit random linear combinations of the vectors in $\mathcal{S}_v$, which is maintained by each node according to the messages received at the node.

The reliability in the Strong-SMSM algorithm is inherited from the reliability in RLNC. That is, if min-cuts are $\rho(s,d) \geq k+w$ and $\rho(S,d) \geq (k+w)|S|$ for each $s\in\mathcal{S}$ and $d\in\mathcal{D}$ then $k+w=n$ messages can be transmitted reliably from each source to all destinations.
Since the transformation $\textbf{M}_s$ to $\textbf{X}_s$ can be inverted as given in \Cref{SecureGossipAlgorithmMultiplexReliability}, the destinations can decode the original messages.
\subsection{Information Leakage at the Eavesdropper}
We now prove the strong-security constraint is met.
In particular, for the strong constraint, we wish to show that $I(\textbf{M}_s;\textbf{Z}_{w})$ is small for all $s\in\mathcal{S}$.
We will do that by showing that given $\textbf{Z}_w = \textbf{W}[\textbf{X}_1, \cdots, \textbf{X}_{|S|}]$  where $\textbf{W}$ is arbitrary encoding matrix due to network, Eve's information, all possibilities for $\textbf{M}_s$ are equally likely, hence Eve has no intelligent estimation for $\textbf{M}_s$.

Denote by $\mathcal{C}_n$ the random codebook and by $\textbf{X}_s$ the set of codewords corresponding to $\vec{M}_{s,1}\ldots \vec{M}_{s,k}$.
To analyze the information leakage at the eavesdropper, note that Eve has access to at most $w$ linear combinations on the rows of $\textbf{X}_s$.

Next, note that the columns of $\textbf{X}_s$ are independent (by the construction of the codebook, creating $\textbf{X}_s$ is done independently per-column; $c$ columns are used only to reduce the NC overhead). Hence, it suffices to consider the information leakage for each column $i \in \{1,\ldots,c\}$ from $\textbf{X}_s$ separately.
For each column $i$ of $\textbf{M}_s$, the encoder has $\Delta$ independent and identically distributed codewords, out of which one is selected. Hence, there is an exponential number of codewords, from the eavesdropper's perspective, that can generate a column in $\textbf{X}_s$, and we require that Eve is still confused even given the $w$ linear combinations from each column.
Hence, when the number of codewords is $2^n$,  given the $w$ linear combinations from each column in $\textbf{Z}_w(i)$, the eavesdropper has $2^n(1/2)^w=2^{(n-w)}$ possible codewords. We now denote $l=n-w$ and define the shell $\mathcal{S}h (\textbf{Z}_w(i),l)$, the set of all $n$-tuples consistent with $\textbf{Z}_w(i)$. Clearly, there are $2^l$ tuples in $\mathcal{S}h (\textbf{Z}_w(i),l)$.

We assume Eve has the codebook, yet does not know which column from each bin is selected to be the codeword. Hence, we wish to show that given $\textbf{Z}_w(i)$, Eve will have at least one candidate per bin. Now,
\ifdouble
\begin{multline*}
Pr(\textbf{X}_{s}^{n}(i) \in \mathcal{C}_n \cap \textbf{X}_{s}^{n}(i) \in \mathcal{S}h(\textbf{Z}_w(i),l))\\
= \frac{\mathcal{V}ol(\mathcal{S}h(\textbf{Z}_w(i),l))}{2^n} =  \frac{2^{\left(n-w\right)}}{2^{n}}.
\end{multline*}
\else
\[
Pr(\textbf{X}_{s}^{n}(i) \in \mathcal{C}_n \cap \textbf{X}_{s}^{n}(i) \in \mathcal{S}h(\textbf{Z}_w(i),l))
= \frac{\mathcal{V}ol(\mathcal{S}h(\textbf{Z}_w(i),l))}{2^n} =  \frac{2^{\left(n-w\right)}}{2^{n}}.
\]
\fi
In each bin of $\mathcal{C}_n$, we have $\Delta =2^{w+n\epsilon}$ codewords. Thus, the \emph{expected} number of codewords Eve sees in her shell, \emph{per bin} is
\begin{equation*}
\textbf{E}\left[|\{m(i):X^n(i)\in \mathcal{S}h(Z(i),l)\}|\right]  = \frac{2^{w+n\epsilon}*2^{n-w}}{2^n} = 2^{n\epsilon}.
\end{equation*}
Again, we can conclude that on average, and if $n\epsilon$ is not too small, for every column in $\textbf{M}_s$ Eve has a few possibilities \emph{in each bin}, hence cannot locate the right bin. We need to show that all bins have (asymptotically) equally likely number of candidate codewords. Similarly to the individual security proof, we wish to show that the probability that the actual number of options deviates from the average by more than $\varepsilon$ is small. We define $\mathcal{E}_{C_1}(Z(i),l)$ similarly to \Cref{L_To_Eve} and by the Chernoff bound, we have
\begin{equation*}
  Pr(\mathcal{E}_{C_1}(Z(i),l)) \geq 1- 2^{-\varepsilon^{\prime}2^{n\epsilon}}.
\end{equation*}
The reminder of the leakage proof follows the exact same steps as the one in \Cref{L_To_Eve}, yet with $\textbf{M}_s(i)$ replacing $\textbf{M}^{k_s}_s(i)$ and $k$ replacing $k_s$.

\section{Applications}\label{applications}
\ifdouble
\begin{figure}
\centering
\includegraphics[trim= 0cm 0cm 0cm 0cm,clip,scale=0.8]{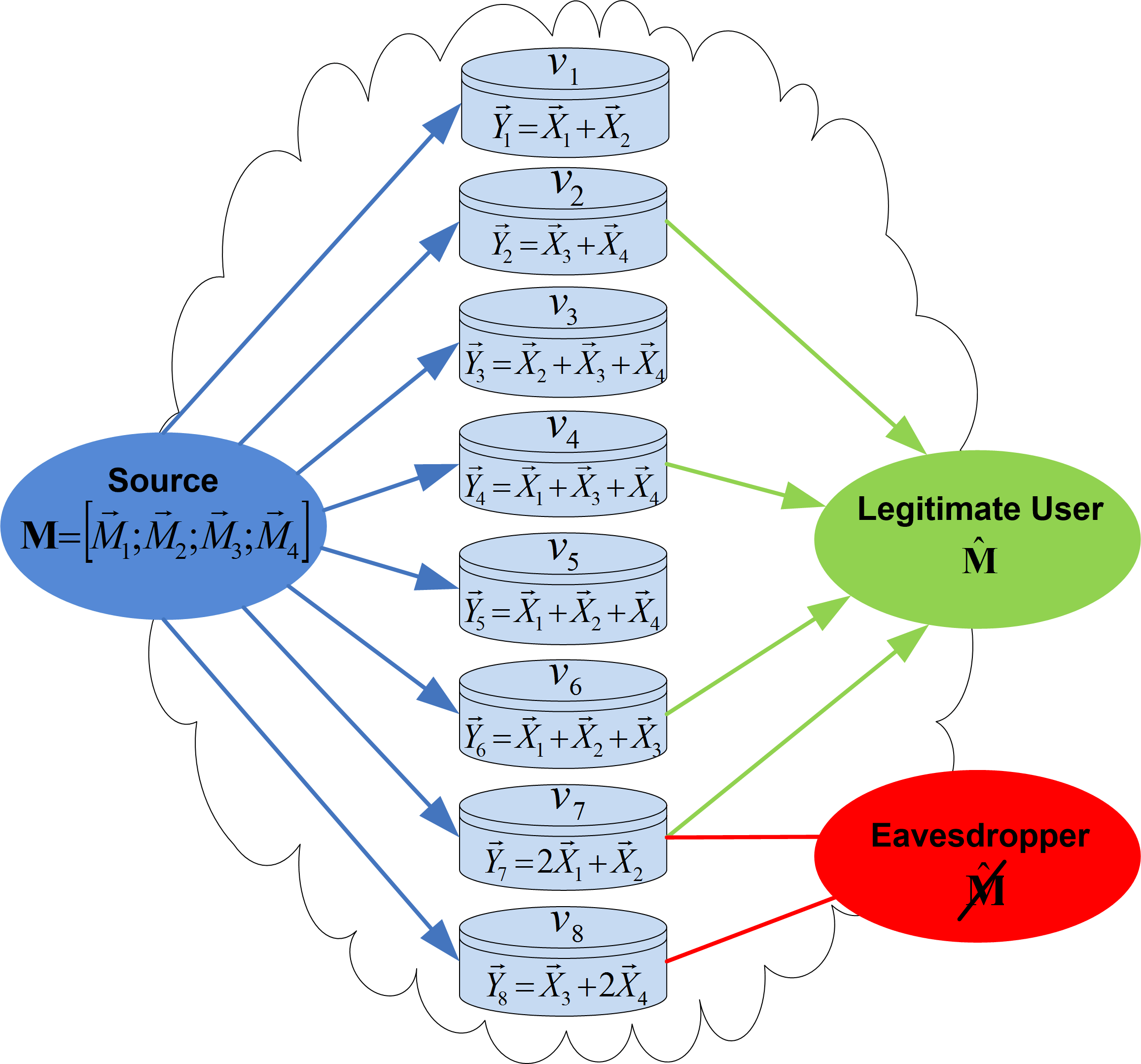}
\caption{Individual Secure Data Center, with $8$ servers. The source needs to store a file $\textbf{M}$ with $4$ messages, where any legitimate user (destination) which is connected to $4$ servers should be able to decode the $4$ original messages. In the individual secure coding scheme of this application, we assume the existence of an eavesdropper, which is able to obtain information from any $2$ servers. We wish that this eavesdropper will not gain any information on each specific message.}
\label{figure:DSS_model}
\end{figure}
\else
\begin{figure}[h]
\centering
\includegraphics[trim= 0cm 0cm 0cm 0cm,clip,scale=0.8]{DSS_4.png}
\caption{Individual Secure Data Center, with $8$ servers. The source needs to store a file $\textbf{M}$ with $4$ messages, where any legitimate user (destination) which is connected to $4$ servers should be able to decode the $4$ original messages. In the individual secure coding scheme of this application, we assume the existence of an eavesdropper, which is able to obtain information from any $2$ servers. We wish that this eavesdropper will not gain any information on each specific message.}
\label{figure:DSS_model}
\end{figure}
\fi
In previous sections we suggested an SMSM code and proved that under the suggested code an eavesdropper which can capture a subset of the packet's traversing the network (up to $w$ packets) is kept ignorant regarding each packet’s content, under the \emph{Individual Security} constraint, without compromising the rate (i.e., achieving full network capacity). In this section, we show several common applications which exemplify the applicability of the suggested code to a diverse range of protocols and applications. The first two examples include only a single source, merely to show the applicability of the \emph{individual secrecy} setup. The third example is multi-source in nature, and includes all aspects of our solution.
\subsection{Data Centers}
One of the most prominent facilities characterizing our new “information explosion” era are distributed \emph{Data Centers}. Such facilities, which aim to cope with the rapidly increasing volumes of data generated, archived and expected to be accessible, are vital to many services such as video sharing, social networks, peer-to-peer cloud storage and many more. Google's GFS \cite{ghemawat2003google}, Amazon's Dynamo \cite{decandia2007dynamo}, Google's BigTable \cite{chang2008bigtable}, Facebook's Apache Hadoop \cite{borthakur2011apache}, Microsoft's WAS \cite{calder2011windows} and LinkedIn's Voldemort \cite{auradkar2012data} are just a few examples of such ubiquitous applications. Obviously, the security and reliability of such \emph{Data Centers} are critical for such applications to be adopted by users and organizations.

In the basic non-secure model \cite{dimakis2010network, dimakis2011survey}, a source $s$ needs to store a file $\textbf{M}$, which is decomposed into $k$ messages, in $v$ servers (nodes), such that any legitimate user $d$ (destination) can reconstruct the file by collecting the stored information from any $l$ servers $(l = \rho(s,d_i)\geq k)$. With one source, as considered in \cite{kadhe2014weaklyNetCod,kadhe2014weakly,paunkoska2016improved}, the secured version constraints the stored chunks such that an eavesdropper, which can observe the information stored at any $w$ servers, will be kept ignorant regarding the actual file stored (see \Cref{figure:DSS_model}). In these works, which consider only one source, a source code to obtain weak secrecy is considered as an outer code, with a loss of a small factor of storage secrecy capacity. Then, \emph{Regenerating Codes} \cite{dimakis2010network, dimakis2011survey,rashmi2011optimal} are used. These are usually suggested to store data in distributed storage.

For the secured multi-source version, we can leverage the individual-SMSM coding scheme suggested herein to enhance the non-secure solution suggested in \cite{acedanski2005good,haeupler2011one,deb2006algebraic,haeupler2011optimality,fitzek2014implementation}, which consider each node in the network as a server which maintains pieces of data using RLNC. We will be able to guarantee that any eavesdropper that can access any $w$ servers will have no information regarding any stored message individually (zero mutual information regarding each message separately).  Specifically, each source $s$ encodes the original data file $\textbf{M}$ using the individual security coding scheme suggested herein (\Cref{SecureGossipAlgorithmMultiplex,LinearCodes}) and then uploads the encoded packets to the $v$ servers. The number and the size of packets uploaded to the servers in the secure solution suggested are as in the non-secure model; thus, we obtain the full capacity of the system.
\begin{figure}
\centering
\includegraphics[trim= 0cm 0cm 0cm 0cm,clip,scale=0.5]{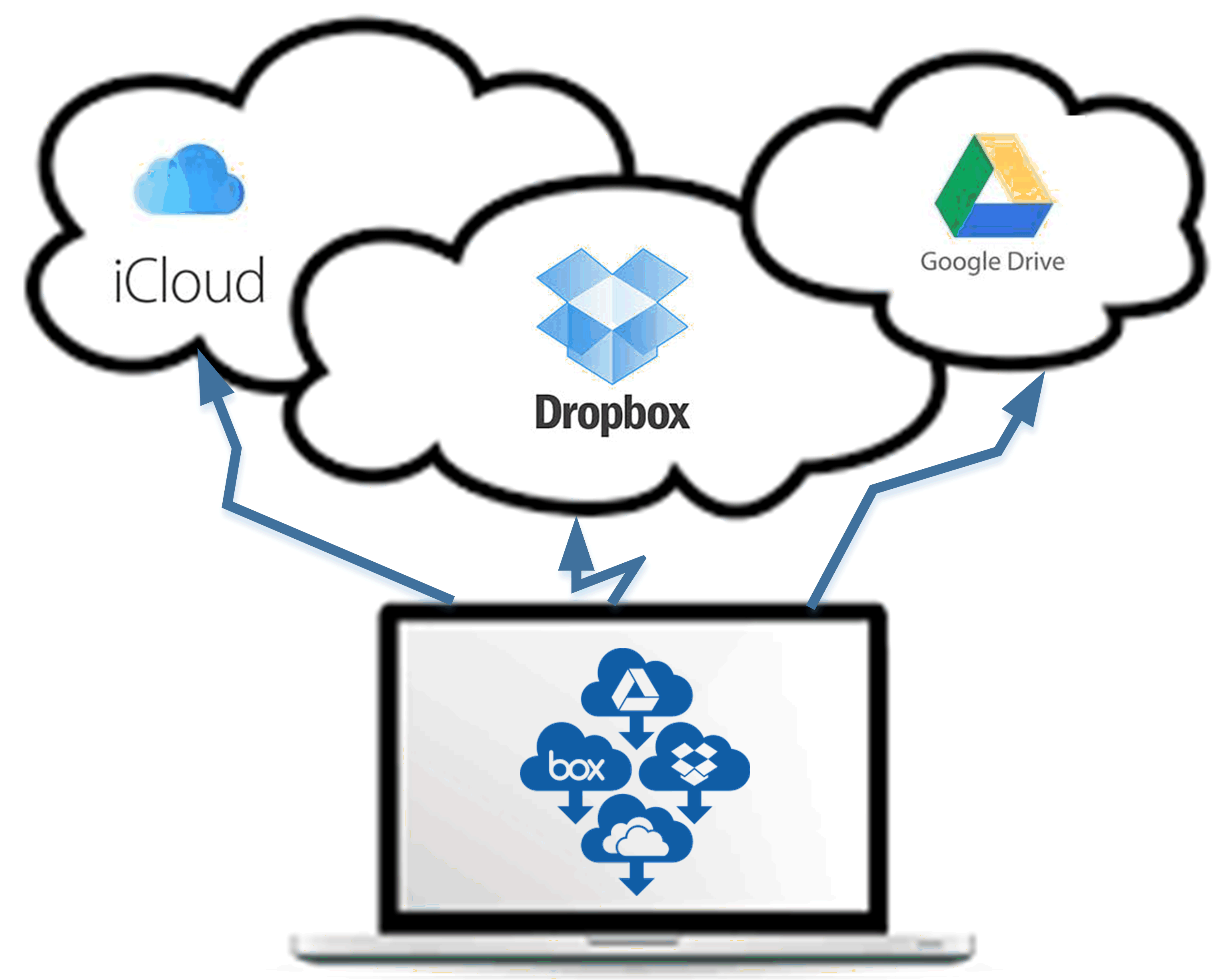}
\caption{Individual Secure Cloud Storage, with various cloud storage providers. The source encodes the original data using the individual security coding scheme suggested and then uploads $w$ encoded packets to different $\lceil k/w \rceil$ cloud storage providers, such that, each provider not only will not be able to decode the original data, but will also have zero information regarding any of the $k-w$ stored messages individually.}
\label{figure:cloud}
\end{figure}

It is important to note that utilizing the individual security coding scheme suggested in this paper, one not only ensures individual secrecy from potential eavesdroppers, but also can guarantee privacy from the hosting servers themselves, such that, each server not only will not be able to decode the original data but will have zero information regarding any of the stored message individually. For example, assume that in the example depicted in \Cref{figure:cloud}, the source $s$ (private user) wants to store a file $\textbf{M}$ in the cloud. To do that, the source can utilize $3$ different cloud storage providers, such as Google Drive, Microsoft OneDrive, Dropbox, etc. However, the source wants to keep the original information private. Hence, by encoding the original data using the individual security coding scheme suggested at the source, and then uploading at most $3$ encoded packets $\vec{Y}_{i_1},\ldots,\vec{Y}_{i_3}$ to any provider, the provider will store the packets in their servers $v$, but these will be kept ignorant of the original file.

\subsection{Wireless Networks}
The inherent broadcast nature of the wireless medium makes network coding techniques pertinent for wireless networks. Specifically, relying on network coding, instead of sending packets (unicast, multicast or broadcast packets) to each intended addressee individually, a source (or an intermediate node which needs to relay packets toward the destination) can transmit a manipulation (usually a linear combination) of the packets destined to the various receivers. A receiver collecting sufficient number of such combinations (coded packets) can reconstruct (decode) the original packets.
\ifdouble
\begin{figure}[h]
\centering
\includegraphics[trim= 0cm 0cm 0cm 0cm,clip,scale=0.85]{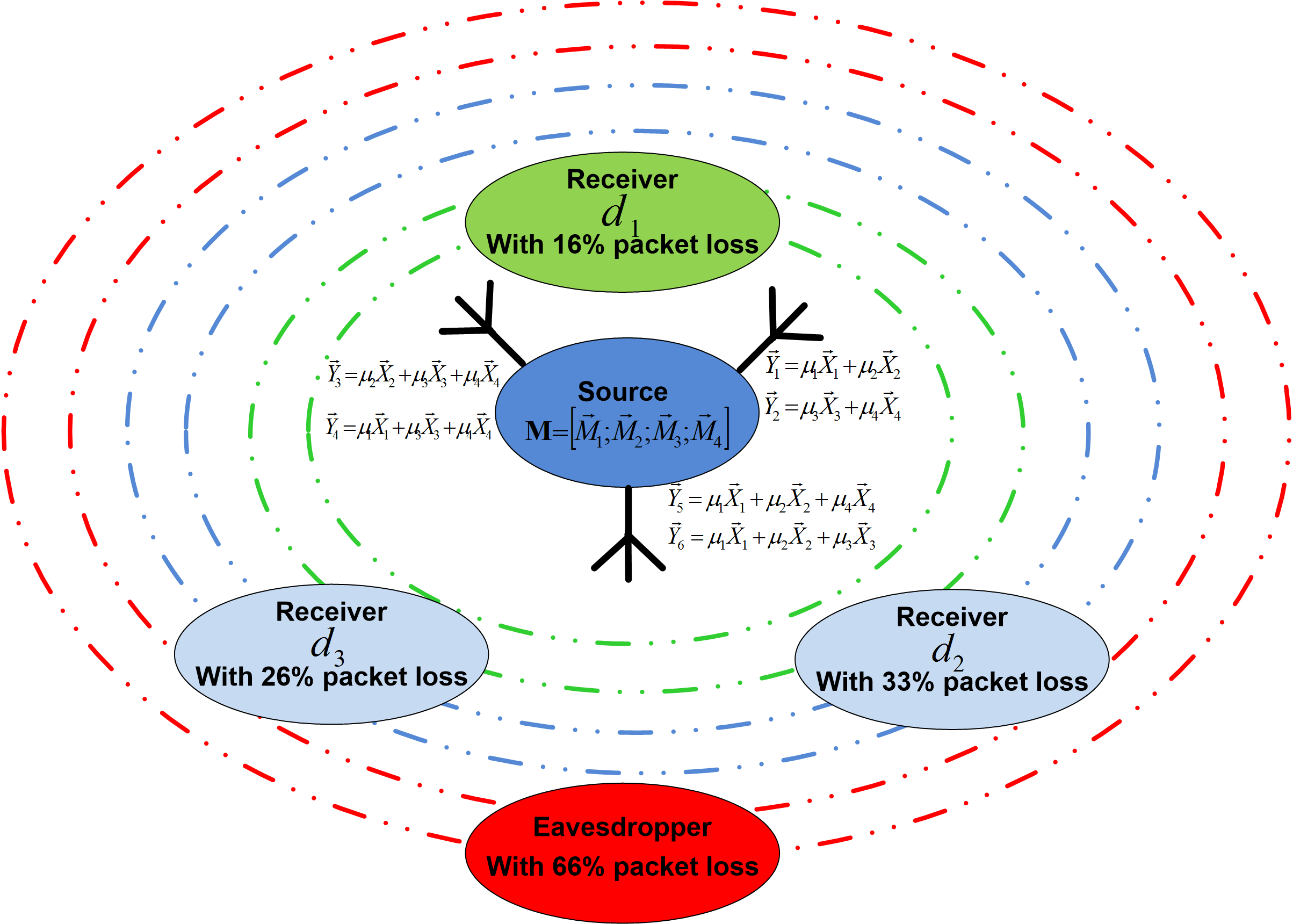}
\caption{Individual Secure Wireless Network. The source needs to disseminate $4$ message over a wireless network to $3$ legitimate users. In the individual secure model, we assume the existence of an eavesdropper. However, due to interference, collisions (low SINR) or low SNR, each of the receivers has a different packet loss rate, according to the physical constraints in the wireless networks.}
\label{figure:wireless_model}
\end{figure}
\else
\begin{figure}[h]
\centering
\includegraphics[trim= 0cm 0cm 0cm 0cm,clip,scale=0.85]{wireless_2.png}
\caption{Individual Secure Wireless Network. The source needs to disseminate $4$ message over a wireless network to $3$ legitimate users. In the individual secure model, we assume the existence of an eavesdropper. However, due to interference, collisions (low SINR) or low SNR, each of the receivers has a different packet loss rate, according to the physical constraints in the wireless networks.}
\label{figure:wireless_model}
\end{figure}
\fi
Relying on NC when the channel is lossy, i.e., there is a probability that a sent packet will not be received (decoded) by its intended receiver (receivers), has great advantages as instead of resending each un-coded packet until received correctly by its intended receiver, a sender keeps sending combinations of the original packets until each receiver collects a sufficient number of combinations (e.g., \cite{lun2005efficient,kim2011algebraic,popa2011going,talooki2015security,speidel2015can,lun2008coding,hansen2015network}). Accordingly, a sender can \emph{a priori} estimate the number of coded packets needed according to the most lossy channel and send coded packets accordingly, without relying on any feedbacks mechanism.

The secured version of this data dissemination problem requires that an eavesdropper with a degraded channel which can obtain only a subset of the transmitted packet will not be able to attain any information regarding any of the original packets. Utilizing the individual security coding scheme suggested in this paper, in which the source estimates the number of packets needed to be sent according to the estimated packet loss to each receiver, encodes the messages before the wireless transmission according to the procedure presented in \Cref{SecureGossipAlgorithmMultiplex} and the anticipated packet loss to the eavesdropper and broadcast the coded packets ensures that the legitimate users will be available to obtain the original transmitted data while any eavesdropper with higher packet loss rate will be kept ignorant. A simple illustration is given in \Cref{figure:wireless_model}: a transmitter utilizing MU-MIMO techniques to direct the beams toward its intended receivers such that eavesdroppers which are sparsely scattered are expected to experience a lower quality channel hence higher packet loss than the intended receivers; the transmitter is utilizing the individual-SMSM coding scheme suggested in \Cref{SecureGossipAlgorithmMultiplex}, ensuring individual security as proved in this paper.
In some sense, a similar application was suggested in \cite{yan2013algorithms,yan2014weakly} for secure data exchange,  where legitimate clients want to directly exchange information over a wireless channel in the presence of an eavesdropper. These works considered the matrix completion problem \cite{cohen1989ranks}, and provided an MRD code, such that, given constraints on the number of messages that each legitimate client has and transmits, established bounds on the number of transmissions over broadcasting channel required for both reliability at the legitimate clients (of all the data exchanged), and weakly secrecy at the eavesdropper who obtains all transmitted data.
\subsection{Live Broadcast of Video with Multi-Path Streaming}
Multi-Path routing techniques which enable the use of multiple alternative paths between a source and a destination through the network, has been widely exploited over the years to provide a variety of benefits such as load balancing, fault tolerance, bandwidth enhancement, etc. One such ubiquitous example is LiveU innovative solution for distributing live video streams via wireless networks \cite{liveu2017White,liveu2017site}. In these systems, the real-time recorded video is encoded in packets by the source. These encoded packets include pieces of the data to be transmitted through different distributed media. For example, the pieces of the data transmitted over various technologies such as cellular networks, WiFi, satellite, fiber internet, etc. or various providers, e.g., Sprint, T-Mobile, AT\&T Verizon, etc. A local server at the legitimate client decodes the data received from the different distributed media. This distributed streaming system maintains a high-quality viewer experience and cost-efficiency since the source can adapt the number of pieces dynamically to be transmitted by the different media. For example, if the connection using cellular or WiFi is lost during the real-time transmission, the source can route the pieces of the data dynamically by other connections or medias, taking into account the cost of each transmission by the optional connections.

In context to individual security suggested herein, we consider the case where there is an eavesdropper which has access to only a subset of the connections during the real-time distributed streaming (we assume that the eavesdropper can access any set of the streams unknown to the source, yet only a subset thereof). Utilizing the individual security coding scheme suggested in this paper, i.e., encoding the packets prior to the transmission, according to the coding scheme suggested in \Cref{SecureGossipAlgorithmMultiplex}, guarantees Individual Secure Live Broadcast of Video with Multi-Path Streaming, such that an eavesdropper which can capture at most $w$ streams transmitted over the different distributed medias is kept ignorant in the sense of having zero mutual information, regarding any set of $k_s$ messages individually, yet may potentially obtain \emph{insignificant} information about mixtures of packets transmitted.
Figure \ref{fig:LiveU} depicts a graphical representation of this system.
\ifdouble
\begin{figure}
\centering
\includegraphics[trim=0cm 0.0cm 0cm 0cm,clip,scale=0.85]{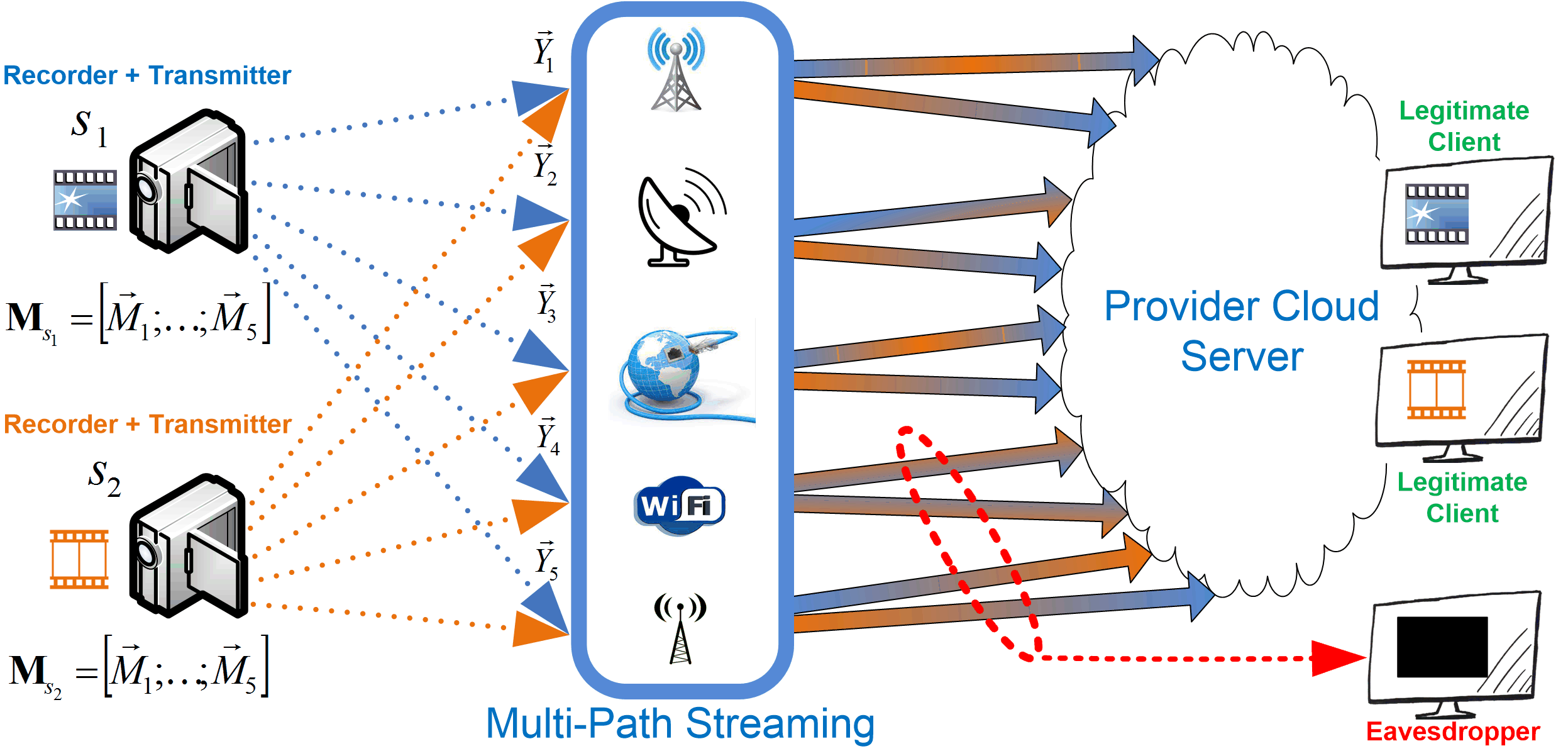}
\caption{Individual Secure Live Broadcast of Video with Multi-Path Streaming. The sources $s_1$, $s_2$ needs to transmit the real-time recorded video $\textbf{M}_{s_1}$, $\textbf{M}_{s_2}$,  respectively, encoded by LNC to $5$ packets $\vec{Y}_1,\ldots,\vec{Y}_5$ from each source, over the different medias.
The intermediate providers, such as cellular networks, WiFi, satellite, fiber internet, etc may use LNC before their routing transmission.
Then, the legitimate clints which received from a local provider cloud server all the packets, can decodes all the data. In the individual secure model of this problem, we assume the existence of an eavesdropper, which is able to obtain information from any $4$ connections. However, the individual secure code suggested herein, assure that the eavesdropper is not able to decode the original recorded information from the wiretapped connections.}
\label{fig:LiveU}
\end{figure}
\else
\begin{figure}[h]
\centering
\includegraphics[trim=0cm 0.0cm 0cm 0cm,clip,scale=1.0]{LiveU2.png}
\caption{Individual Secure Live Broadcast of Video with Multi-Path Streaming. The sources $s_1$, $s_2$ needs to transmit the real-time recorded video $\textbf{M}_{s_1}$, $\textbf{M}_{s_2}$,  respectively, encoded by LNC to $5$ packets $\vec{Y}_1,\ldots,\vec{Y}_5$ from each source, over the different medias.
The intermediate providers, such as cellular networks, WiFi, satellite, fiber internet, etc may use LNC before their routing transmission.
Then, the legitimate clints which received from a local provider cloud server all the packets, can decodes all the data. In the individual secure model of this problem, we assume the existence of an eavesdropper, which is able to obtain information from any $4$ connections. However, the individual secure code suggested herein, assure that the eavesdropper is not able to decode the original recorded information from the wiretapped connections.}
\label{fig:LiveU}
\end{figure}
\fi

\section{Conclusions}\label{conc}
In this paper, we proposed SMSM codes under an \emph{Individual Security} constraint.
In this model, the eavesdropper is kept ignorant, in the sense of having zero mutual information regarding each message separately, yet may potentially obtain \emph{insignificant} information about mixtures of packets transmitted.
In fact, it ensures Eve is kept ignorant of any \emph{set of $k-w$ messages}. That is, guarantee zero mutual information, with respect to any set of $k-w$ messages.

We completely characterized the rate region for individually secure MSM. Specifically, we showed that secure communication is achievable up to the min-cut, that is, without any decrease in the rate or any message ``blow-up" by extra randomness.
Moreover, we provided a code for Strong-SMSM by extra randomness, i.e., requiring Eve's mutual information \emph{with all messages simultaneously} to be zero. While this included a rate loss, it is important to note that in the code suggested the alphabet size did not increase with the network parameters due to the strong-security constraint.

Finally, we showed a few examples out of many important applications, like data centers, wireless networks, gossip and live broadcasting of video, for which the individual security coding schemes suggested is applicable, and achieves the full capacity of these systems.

\bibliographystyle{IEEE}
\bibliography{Ref1,Ref2}

\begin{thebibliography}{10}
\providecommand{\url}[1]{#1}
\csname url@samestyle\endcsname
\providecommand{\newblock}{\relax}
\providecommand{\bibinfo}[2]{#2}
\providecommand{\BIBentrySTDinterwordspacing}{\spaceskip=0pt\relax}
\providecommand{\BIBentryALTinterwordstretchfactor}{4}
\providecommand{\BIBentryALTinterwordspacing}{\spaceskip=\fontdimen2\font plus
\BIBentryALTinterwordstretchfactor\fontdimen3\font minus
  \fontdimen4\font\relax}
\providecommand{\BIBforeignlanguage}[2]{{%
\expandafter\ifx\csname l@#1\endcsname\relax
\typeout{** WARNING: IEEEtran.bst: No hyphenation pattern has been}%
\typeout{** loaded for the language `#1'. Using the pattern for}%
\typeout{** the default language instead.}%
\else
\language=\csname l@#1\endcsname
\fi
#2}}
\providecommand{\BIBdecl}{\relax}
\BIBdecl

\bibitem{li2003linear}
S.-Y. Li, R.~W. Yeung, and N.~Cai, ``Linear network coding,'' \emph{IEEE
  transactions on information theory}, vol.~49, no.~2, pp. 371--381, 2003.

\bibitem{ho2006random}
T.~Ho, M.~M{\'e}dard, R.~Koetter, D.~R. Karger, M.~Effros, J.~Shi, and
  B.~Leong, ``A random linear network coding approach to multicast,''
  \emph{IEEE Transactions on Information Theory}, vol.~52, no.~10, pp.
  4413--4430, 2006.

\bibitem{cai2002secure}
N.~Cai and R.~W. Yeung, ``Secure network coding,'' in \emph{Information Theory,
  2002. Proceedings. 2002 IEEE International Symposium on}.\hskip 1em plus
  0.5em minus 0.4em\relax IEEE, 2002, p. 323.

\bibitem{chan2008capacity}
T.~Chan and A.~Grant, ``Capacity bounds for secure network coding,'' in
  \emph{Communications Theory Workshop, 2008. AusCTW 2008. Australian}.\hskip
  1em plus 0.5em minus 0.4em\relax IEEE, 2008, pp. 95--100.

\bibitem{cai2011secure}
N.~Cai and R.~W. Yeung, ``Secure network coding on a wiretap network,''
  \emph{Information Theory, IEEE Transactions on}, vol.~57, no.~1, pp.
  424--435, 2011.

\bibitem{el2007wiretap}
S.~Y. El~Rouayheb and E.~Soljanin, ``On wiretap networks {II},'' in
  \emph{Information Theory, 2007. ISIT 2007. IEEE International Symposium
  on}.\hskip 1em plus 0.5em minus 0.4em\relax IEEE, 2007, pp. 551--555.

\bibitem{silva2008security}
D.~Silva and F.~R. Kschischang, ``Security for wiretap networks via rank-metric
  codes,'' in \emph{Information Theory, 2008. ISIT 2008. IEEE International
  Symposium on}.\hskip 1em plus 0.5em minus 0.4em\relax IEEE, 2008, pp.
  176--180.

\bibitem{el2012secure}
S.~El~Rouayheb, E.~Soljanin, and A.~Sprintson, ``Secure network coding for
  wiretap networks of type {II},'' \emph{Information Theory, IEEE Transactions
  on}, vol.~58, no.~3, pp. 1361--1371, 2012.

\bibitem{cai2007security}
N.~Cai and R.~W. Yeung, ``A security condition for multi-source linear network
  coding,'' in \emph{Information Theory, 2007. ISIT 2007. IEEE International
  Symposium on}.\hskip 1em plus 0.5em minus 0.4em\relax IEEE, 2007, pp.
  561--565.

\bibitem{zhang2009general}
Z.~Zhang and R.~W. Yeung, ``A general security condition for multi-source
  linear network coding,'' in \emph{Information Theory, 2009. ISIT 2009. IEEE
  International Symposium on}.\hskip 1em plus 0.5em minus 0.4em\relax IEEE,
  2009, pp. 1155--1158.

\bibitem{cai2009valuable}
N.~Cai, ``Valuable messages and random outputs of channels in linear network
  coding,'' in \emph{Information Theory, 2009. ISIT 2009. IEEE International
  Symposium on}.\hskip 1em plus 0.5em minus 0.4em\relax IEEE, 2009, pp.
  413--417.

\bibitem{cai2011theory}
N.~Cai and T.~Chan, ``Theory of secure network coding,'' \emph{Proceedings of
  the IEEE}, vol.~99, no.~3, pp. 421--437, 2011.

\bibitem{chan2014network}
T.~H. Chan and A.~Grant, ``Network coding capacity regions via entropy
  functions,'' \emph{IEEE Transactions on Information Theory}, vol.~60, no.~9,
  pp. 5347--5374, 2014.

\bibitem{kobayashi2013secure}
D.~Kobayashi, H.~Yamamoto, and T.~Ogawa, ``Secure multiplex coding attaining
  channel capacity in wiretap channels,'' \emph{IEEE Transactions on
  Information Theory}, vol.~59, no.~12, pp. 8131--8143, 2013.

\bibitem{mansour2014secrecy}
A.~S. Mansour, R.~F. Schaefer, and H.~Boche, ``Secrecy measures for broadcast
  channels with receiver side information: Joint vs individual,'' in
  \emph{Information Theory Workshop (ITW), 2014 IEEE}.\hskip 1em plus 0.5em
  minus 0.4em\relax IEEE, 2014, pp. 426--430.

\bibitem{chen2015individual}
Y.~Chen, O.~O. Koyluoglu, and A.~Sezgin, ``On the individual secrecy rate
  region for the broadcast channel with an external eavesdropper,'' in
  \emph{2015 IEEE International Symposium on Information Theory (ISIT)}.\hskip
  1em plus 0.5em minus 0.4em\relax IEEE, 2015, pp. 1347--1351.

\bibitem{mansour2015individual}
A.~S. Mansour, R.~F. Schaefer, and H.~Boche, ``The individual secrecy capacity
  of degraded multi-receiver wiretap broadcast channels,'' in \emph{2015 IEEE
  International Conference on Communications (ICC)}.\hskip 1em plus 0.5em minus
  0.4em\relax IEEE, 2015, pp. 4181--4186.

\bibitem{mansourindividual}
------, ``On the individual secrecy capacity regions of the general, degraded
  and gaussian multi-receiver wiretap broadcast channel,'' \emph{IEEE
  Transactions on Information and Security, 2016}, vol.~11, no.~9, pp.
  2107--2122, 2016.

\bibitem{goldenbaum2015multiple}
M.~Goldenbaum, R.~F. Schaefer, and H.~V. Poor, ``The multiple-access channel
  with an external eavesdropper: Trusted vs. untrusted users,'' in \emph{2015
  49th Asilomar Conference on Signals, Systems and Computers}.\hskip 1em plus
  0.5em minus 0.4em\relax IEEE, 2015, pp. 564--568.

\bibitem{chensecure}
Y.~Chen, O.~O. Koyluoglu, and A.~H. Vinck, ``On secure communication over the
  multiple access channel,'' \emph{International Symposium on Information
  Theory and Its Applications (ISITA), 2016 IEEE}, 2016.

\bibitem{bhattad2005weakly}
K.~Bhattad and K.~R. Narayanan, ``Weakly secure network coding,'' \emph{NetCod,
  Apr}, vol. 104, 2005.

\bibitem{lima2007random}
L.~Lima, M.~M{\'e}dard, and J.~Barros, ``Random linear network coding: A free
  cipher?'' in \emph{Information Theory, 2007. ISIT 2007. IEEE International
  Symposium on}.\hskip 1em plus 0.5em minus 0.4em\relax IEEE, 2007, pp.
  546--550.

\bibitem{claridge2017probability}
J.~Claridge and I.~Chatzigeorgiou, ``Probability of partially decoding
  network-coded messages,'' \emph{IEEE Communications Letters}, vol.~21, no.~9,
  pp. 1945--1948, 2017.

\bibitem{kadhe2014weaklyNetCod}
S.~Kadhe and A.~Sprintson, ``Weakly secure regenerating codes for distributed
  storage,'' in \emph{Network Coding (NetCod), 2014 International Symposium
  on}.\hskip 1em plus 0.5em minus 0.4em\relax IEEE, 2014, pp. 1--6.

\bibitem{kadhe2014weakly}
------, ``On a weakly secure regenerating code construction for minimum storage
  regime,'' in \emph{Communication, Control, and Computing (Allerton), 2014
  52nd Annual Allerton Conference on}.\hskip 1em plus 0.5em minus 0.4em\relax
  IEEE, 2014, pp. 445--452.

\bibitem{paunkoska2016improved}
N.~Paunkoska, V.~Kafedziski, and N.~Marina, ``Improved perfect secrecy of
  distributed storage systems using interference alignment,'' in \emph{Ultra
  Modern Telecommunications and Control Systems and Workshops (ICUMT), 2016 8th
  International Congress on}.\hskip 1em plus 0.5em minus 0.4em\relax IEEE,
  2016, pp. 240--245.

\bibitem{silva2009universal}
D.~Silva and F.~R. Kschischang, ``Universal weakly secure network coding,'' in
  \emph{Networking and Information Theory, 2009. ITW 2009. IEEE Information
  Theory Workshop on}.\hskip 1em plus 0.5em minus 0.4em\relax IEEE, 2009, pp.
  281--285.

\bibitem{deb2006algebraic}
S.~Deb, M.~M{\'e}dard, and C.~Choute, ``Algebraic gossip: A network coding
  approach to optimal multiple rumor mongering,'' \emph{Information Theory,
  IEEE Transactions on}, vol.~52, no.~6, pp. 2486--2507, 2006.

\bibitem{demers1987epidemic}
A.~Demers, D.~Greene, C.~Hauser, W.~Irish, J.~Larson, S.~Shenker, H.~Sturgis,
  D.~Swinehart, and D.~Terry, ``Epidemic algorithms for replicated database
  maintenance,'' in \emph{Proceedings of the sixth annual ACM Symposium on
  Principles of distributed computing}.\hskip 1em plus 0.5em minus 0.4em\relax
  ACM, 1987, pp. 1--12.

\bibitem{deb2005random}
S.~Deb, M.~Medard, and C.~Choute, ``On random network coding based information
  dissemination,'' in \emph{Information Theory, 2005. ISIT 2005. Proceedings.
  International Symposium on}.\hskip 1em plus 0.5em minus 0.4em\relax IEEE,
  2005, pp. 278--282.

\bibitem{haeupler2011analyzing}
B.~Haeupler, ``Analyzing network coding gossip made easy,'' in
  \emph{Proceedings of the 43rd annual ACM symposium on Theory of
  computing}.\hskip 1em plus 0.5em minus 0.4em\relax ACM, 2011, pp. 293--302.

\bibitem{mansour2015individual1}
A.~S. Mansour, R.~F. Schaefer, and H.~Boche, ``The individual secrecy capacity
  of the gaussian {SISO} and degraded gaussian {MIMO} multi-receiver wiretap
  channel,'' in \emph{2015 IEEE 16th International Workshop on Signal
  Processing Advances in Wireless Communications (SPAWC)}.\hskip 1em plus 0.5em
  minus 0.4em\relax IEEE, 2015, pp. 365--369.

\bibitem{cohen2015network}
A.~Cohen, B.~Haeupler, C.~Avin, and M.~M{\'e}dard, ``Network coding based
  information spreading in dynamic networks with correlated data,'' \emph{IEEE
  Journal on Selected Areas in Communications}, vol.~33, no.~2, pp. 213--224,
  2015.

\bibitem{jaggi2005polynomial}
S.~Jaggi, P.~Sanders, P.~A. Chou, M.~Effros, S.~Egner, K.~Jain, and L.~M.
  Tolhuizen, ``Polynomial time algorithms for multicast network code
  construction,'' \emph{IEEE Transactions on Information Theory}, vol.~51,
  no.~6, pp. 1973--1982, 2005.

\bibitem{karp2000randomized}
R.~Karp, C.~Schindelhauer, S.~Shenker, and B.~Vocking, ``Randomized rumor
  spreading,'' in \emph{Foundations of Computer Science, 2000. Proceedings.
  41st Annual Symposium on}.\hskip 1em plus 0.5em minus 0.4em\relax IEEE, 2000,
  pp. 565--574.

\bibitem{kempe2003gossip}
D.~Kempe, A.~Dobra, and J.~Gehrke, ``Gossip-based computation of aggregate
  information,'' in \emph{Foundations of Computer Science, 2003. Proceedings.
  44th Annual IEEE Symposium on}.\hskip 1em plus 0.5em minus 0.4em\relax IEEE,
  2003, pp. 482--491.

\bibitem{boyd2006randomized}
S.~Boyd, A.~Ghosh, B.~Prabhakar, and D.~Shah, ``Randomized gossip algorithms,''
  \emph{Information Theory, IEEE Transactions on}, vol.~52, no.~6, pp.
  2508--2530, 2006.

\bibitem{feldman2004secure}
J.~Feldman, T.~Malkin, R.~A. Servedio, and C.~Stein, ``Secure network coding
  via filtered secret sharing,'' in \emph{42nd Annual Allerton Conf. Commun},
  2004.

\bibitem{C13}
M.~Bloch and J.~Barros, \emph{Physical-Layer Security: From Information Theory
  to Security Engineering}.\hskip 1em plus 0.5em minus 0.4em\relax Cambridge
  University Press, 2011.

\bibitem{C2}
A.~D. Wyner, ``The wire-tap channel,'' \emph{Bell Syst. Tech. J.}, vol.~54, pp.
  1355--1387, 1975.

\bibitem{zhou2016physical}
X.~Zhou, Y.~Zhang, and L.~Song, \emph{Physical layer security in wireless
  communications}.\hskip 1em plus 0.5em minus 0.4em\relax Crc Press, 2016.

\bibitem{dey2015sufficiently}
B.~K. Dey, S.~Jaggi, and M.~Langberg, ``Sufficiently myopic adversaries are
  blind,'' in \emph{Information Theory (ISIT), 2015 IEEE International
  Symposium on}.\hskip 1em plus 0.5em minus 0.4em\relax IEEE, 2015, pp.
  1164--1168.

\bibitem{silva2011universal}
D.~Silva and F.~R. Kschischang, ``Universal secure network coding via
  rank-metric codes,'' \emph{IEEE Transactions on Information Theory}, vol.~57,
  no.~2, pp. 1124--1135, 2011.

\bibitem{matsumoto2017universal}
R.~Matsumoto and M.~Hayashi, ``Universal secure multiplex network coding with
  dependent and non-uniform messages,'' \emph{IEEE Transactions on Information
  Theory}, vol.~63, no.~6, pp. 3773--3782, 2017.

\bibitem{gabidulin1985theory}
E.~M. Gabidulin, ``Theory of codes with maximum rank distance,'' \emph{Problemy
  Peredachi Informatsii}, vol.~21, no.~1, pp. 3--16, 1985.

\bibitem{roth1991maximum}
R.~M. Roth, ``Maximum-rank array codes and their application to crisscross
  error correction,'' \emph{IEEE transactions on Information Theory}, vol.~37,
  no.~2, pp. 328--336, 1991.

\bibitem{ghemawat2003google}
S.~Ghemawat, H.~Gobioff, and S.-T. Leung, ``The google file system,'' in
  \emph{ACM SIGOPS operating systems review}, vol.~37, no.~5.\hskip 1em plus
  0.5em minus 0.4em\relax ACM, 2003, pp. 29--43.

\bibitem{decandia2007dynamo}
G.~DeCandia, D.~Hastorun, M.~Jampani, G.~Kakulapati, A.~Lakshman, A.~Pilchin,
  S.~Sivasubramanian, P.~Vosshall, and W.~Vogels, ``Dynamo: amazon's highly
  available key-value store,'' \emph{ACM SIGOPS operating systems review},
  vol.~41, no.~6, pp. 205--220, 2007.

\bibitem{chang2008bigtable}
F.~Chang, J.~Dean, S.~Ghemawat, W.~C. Hsieh, D.~A. Wallach, M.~Burrows,
  T.~Chandra, A.~Fikes, and R.~E. Gruber, ``Bigtable: A distributed storage
  system for structured data,'' \emph{ACM Transactions on Computer Systems
  (TOCS)}, vol.~26, no.~2, p.~4, 2008.

\bibitem{borthakur2011apache}
D.~Borthakur, J.~Gray, J.~S. Sarma, K.~Muthukkaruppan, N.~Spiegelberg,
  H.~Kuang, K.~Ranganathan, D.~Molkov, A.~Menon, S.~Rash \emph{et~al.},
  ``Apache hadoop goes realtime at facebook,'' in \emph{Proceedings of the 2011
  ACM SIGMOD International Conference on Management of data}.\hskip 1em plus
  0.5em minus 0.4em\relax ACM, 2011, pp. 1071--1080.

\bibitem{calder2011windows}
B.~Calder, J.~Wang, A.~Ogus, N.~Nilakantan, A.~Skjolsvold, S.~McKelvie, Y.~Xu,
  S.~Srivastav, J.~Wu, H.~Simitci \emph{et~al.}, ``Windows azure storage: a
  highly available cloud storage service with strong consistency,'' in
  \emph{Proceedings of the Twenty-Third ACM Symposium on Operating Systems
  Principles}.\hskip 1em plus 0.5em minus 0.4em\relax ACM, 2011, pp. 143--157.

\bibitem{auradkar2012data}
A.~Auradkar, C.~Botev, S.~Das, D.~De~Maagd, A.~Feinberg, P.~Ganti, L.~Gao,
  B.~Ghosh, K.~Gopalakrishna, B.~Harris \emph{et~al.}, ``Data infrastructure at
  linkedin,'' in \emph{Data Engineering (ICDE), 2012 IEEE 28th International
  Conference on}.\hskip 1em plus 0.5em minus 0.4em\relax IEEE, 2012, pp.
  1370--1381.

\bibitem{dimakis2010network}
A.~G. Dimakis, P.~B. Godfrey, Y.~Wu, M.~J. Wainwright, and K.~Ramchandran,
  ``Network coding for distributed storage systems,'' \emph{IEEE Transactions
  on Information Theory}, vol.~56, no.~9, pp. 4539--4551, 2010.

\bibitem{dimakis2011survey}
A.~G. Dimakis, K.~Ramchandran, Y.~Wu, and C.~Suh, ``A survey on network codes
  for distributed storage,'' \emph{Proceedings of the IEEE}, vol.~99, no.~3,
  pp. 476--489, 2011.

\bibitem{rashmi2011optimal}
K.~V. Rashmi, N.~B. Shah, and P.~V. Kumar, ``Optimal exact-regenerating codes
  for distributed storage at the msr and mbr points via a product-matrix
  construction,'' \emph{IEEE Transactions on Information Theory}, vol.~57,
  no.~8, pp. 5227--5239, 2011.

\bibitem{acedanski2005good}
S.~Acedanski, S.~Deb, M.~M{\'e}dard, and R.~Koetter, ``How good is random
  linear coding based distributed networked storage,'' in \emph{Workshop on
  Network Coding, Theory and Applications}, 2005, pp. 1--6.

\bibitem{haeupler2011one}
B.~Haeupler and M.~M{\'e}dard, ``One packet suffices-highly efficient
  packetized network coding with finite memory,'' in \emph{Information Theory
  Proceedings (ISIT), 2011 IEEE International Symposium on}.\hskip 1em plus
  0.5em minus 0.4em\relax IEEE, 2011, pp. 1151--1155.

\bibitem{haeupler2011optimality}
B.~Haeupler, M.~Kim, and M.~M{\'e}dard, ``Optimality of network coding with
  buffers,'' in \emph{Information Theory Workshop (ITW), 2011 IEEE}.\hskip 1em
  plus 0.5em minus 0.4em\relax IEEE, 2011, pp. 533--537.

\bibitem{fitzek2014implementation}
F.~H. Fitzek, T.~Toth, A.~Szabados, M.~V. Pedersen, D.~E. Lucani, M.~Sipos,
  H.~Charaf, and M.~Medard, ``Implementation and performance evaluation of
  distributed cloud storage solutions using random linear network coding,'' in
  \emph{Communications Workshops (ICC), 2014 IEEE International Conference
  on}.\hskip 1em plus 0.5em minus 0.4em\relax IEEE, 2014, pp. 249--254.

\bibitem{lun2005efficient}
D.~S. Lun, M.~M{\'e}dard, and R.~Koetter, \emph{Efficient operation of wireless
  packet networks using network coding}.\hskip 1em plus 0.5em minus 0.4em\relax
  IWCT, 2005, vol.~5.

\bibitem{kim2011algebraic}
M.~Kim, M.~M{\'e}dard, and J.~Barros, ``Algebraic watchdog: mitigating
  misbehavior in wireless network coding,'' \emph{IEEE Journal on Selected
  Areas in Communications}, vol.~29, no.~10, pp. 1916--1925, 2011.

\bibitem{popa2011going}
R.~A. Popa, A.~Chiesa, T.~Badirkhanli, and M.~M{\'e}dard, ``Going beyond
  pollution attacks: Forcing byzantine clients to code correctly,'' \emph{arXiv
  preprint arXiv:1108.2080}, 2011.

\bibitem{talooki2015security}
V.~N. Talooki, R.~Bassoli, D.~E. Lucani, J.~Rodriguez, F.~H. Fitzek,
  H.~Marques, and R.~Tafazolli, ``Security concerns and countermeasures in
  network coding based communication systems: A survey,'' \emph{Computer
  Networks}, vol.~83, pp. 422--445, 2015.

\bibitem{speidel2015can}
U.~Speidel, E.~Cocker, P.~Vingelmann, J.~Heide, and M.~M{\'e}dard, ``Can
  network coding bridge the digital divide in the pacific?'' in \emph{Network
  Coding (NetCod), 2015 International Symposium on}.\hskip 1em plus 0.5em minus
  0.4em\relax IEEE, 2015, pp. 86--90.

\bibitem{lun2008coding}
D.~S. Lun, M.~M{\'e}dard, R.~Koetter, and M.~Effros, ``On coding for reliable
  communication over packet networks,'' \emph{Physical Communication}, vol.~1,
  no.~1, pp. 3--20, 2008.

\bibitem{hansen2015network}
J.~Hansen, D.~E. Lucani, J.~Krigslund, M.~M{\'e}dard, and F.~H. Fitzek,
  ``Network coded software defined networking: enabling 5g transmission and
  storage networks,'' \emph{IEEE Communications Magazine}, vol.~53, no.~9, pp.
  100--107, 2015.

\bibitem{yan2013algorithms}
M.~Yan and A.~Sprintson, ``Algorithms for weakly secure data exchange,'' in
  \emph{Network Coding (NetCod), 2013 International Symposium on}.\hskip 1em
  plus 0.5em minus 0.4em\relax IEEE, 2013, pp. 1--6.

\bibitem{yan2014weakly}
M.~Yan, A.~Sprintson, and I.~Zelenko, ``Weakly secure data exchange with
  generalized reed solomon codes,'' in \emph{Information Theory (ISIT), 2014
  IEEE International Symposium on}.\hskip 1em plus 0.5em minus 0.4em\relax
  IEEE, 2014, pp. 1366--1370.

\bibitem{cohen1989ranks}
N.~Cohen, C.~R. Johnson, L.~Rodman, and H.~J. Woerdeman, ``Ranks of completions
  of partial matrices,'' in \emph{The Gohberg anniversary collection.
  Birkh\"{a}user Basel}.\hskip 1em plus 0.5em minus 0.4em\relax Springer, 1989,
  pp. 165--185.

\bibitem{liveu2017White}
\BIBentryALTinterwordspacing
{L}ive{U}. (2017) {L}ive cellular uplinking for television and the web.
  [Online]. Available: \url{Cellular-uplinking-white-paper-LiveU.doc}
\BIBentrySTDinterwordspacing

\bibitem{liveu2017site}
\BIBentryALTinterwordspacing
------. (2017) {T}he {I}nternet can be a scary place for your live video: don't
  let a bad stream cost you your audiences. [Online]. Available:
  \url{http://www.pts.gr/files/LiveU-Live-Streaming-Whitepaper.pdf}
\BIBentrySTDinterwordspacing

\end{thebibliography}
\begin{IEEEbiography}[{\includegraphics[width=1in,height=1.25in,clip,keepaspectratio]{./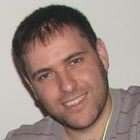}}]
{Alejandro Cohen} received the B.Sc. from the Department of Electrical Engineering, SCE college of engineering, Israel, in 2010
and M.Sc. degree at the communication system engineering, Ben-Gurion University of the Negev, Beer Sheva, Israel, in 2013.
Currently pursuing the Ph.D. degree in communication system engineering.
His main research interests are in the area of wireless communication, security, network information theory and network coding.
From 2007 to 2014 he was with DSP Group in Herzelya where he worked on voice enhancement and signal processing.
Currently he is with Intel in Petah-Tikva where he works in innovation group at mobile and wireless.
\end{IEEEbiography}

\begin{IEEEbiography}[{\includegraphics[width=1in,height=1.25in,clip,keepaspectratio]{./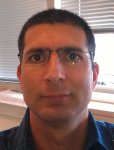}}]
{Asaf Cohen} is a senior lecturer at the Department of Communication Systems Engineering, Ben-Gurion University of the Negev, Israel. Before that, he was a post-doctoral scholar at the California Institute of Technology (Caltech). He received the B.Sc., M.Sc. (both with high honors) and Ph.D. from the Department of Electrical Engineering, Technion, Israel Institute of Technology, in 2001, 2003 and 2007, respectively. From 1998 to 2000 he was with the IBM Research Laboratory in Haifa where he worked on distributed computing. His areas of interest are information theory, learning andcoding. In particular, he is interested in sequential decision making, with applications to detection and estimation; Network security and anomaly detection; Network information theory and network coding; Statistical signal processing; Coding theory and performance analysis of codes. Dr. Cohen received several honors and awards, including the Viterbi post-doctoral scholarship, a student paper award at IEEE Israel 2006 and the Dr. Philip Marlin Prize for Computer Engineering, 2000.
\end{IEEEbiography}

\begin{IEEEbiography}[{\includegraphics[width=1in,height=1.25in,clip,keepaspectratio]{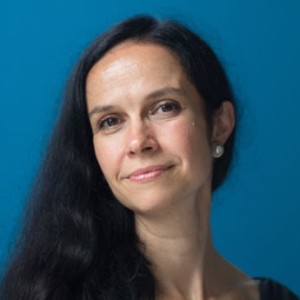}}]
{Muriel M\'{e}dard} is the Cecil H. Green Professor in the Electrical Engineering and Computer Science (EECS) Department at MIT and leads the Network Coding and Reliable Communications Group at the Research Laboratory for Electronics at MIT. She has co-founded three companies to commercialize network coding, CodeOn, Steinwurf and Chocolate Cloud. She has served as editor for many publications of the Institute of Electrical and Electronics Engineers (IEEE), of which she was elected Fellow, and she has served as Editor in Chief of the IEEE Journal on Selected Areas in Communications. She was President of the IEEE Information Theory Society in 2012, and served on its board of governors for eleven years. She has served as technical program committee co-chair of many of the major conferences in information theory, communications and networking. She received the 2009 IEEE Communication Society and Information Theory Society Joint Paper Award, the 2009 William R. Bennett Prize in the Field of Communications Networking, the 2002 IEEE Leon K. Kirchmayer Prize Paper Award, the 2018 ACM SIGCOMM Test of Time Paper Award and several conference paper awards. She was co-winner of the MIT 2004 Harold E. Edgerton Faculty Achievement Award, received the 2013 EECS Graduate Student Association Mentor Award and served as Housemaster for seven years. In 2007 she was named a Gilbreth Lecturer by the U.S. National Academy of Engineering. She received the 2016 IEEE Vehicular Technology James Evans Avant Garde Award, the 2017 Aaron Wyner Distinguished Service Award from the IEEE Information Theory Society and the 2017 IEEE Communications Society Edwin Howard Armstrong Achievement Award.
\end{IEEEbiography}

\begin{IEEEbiography}[{\includegraphics[width=1in,height=1.25in,clip,keepaspectratio]{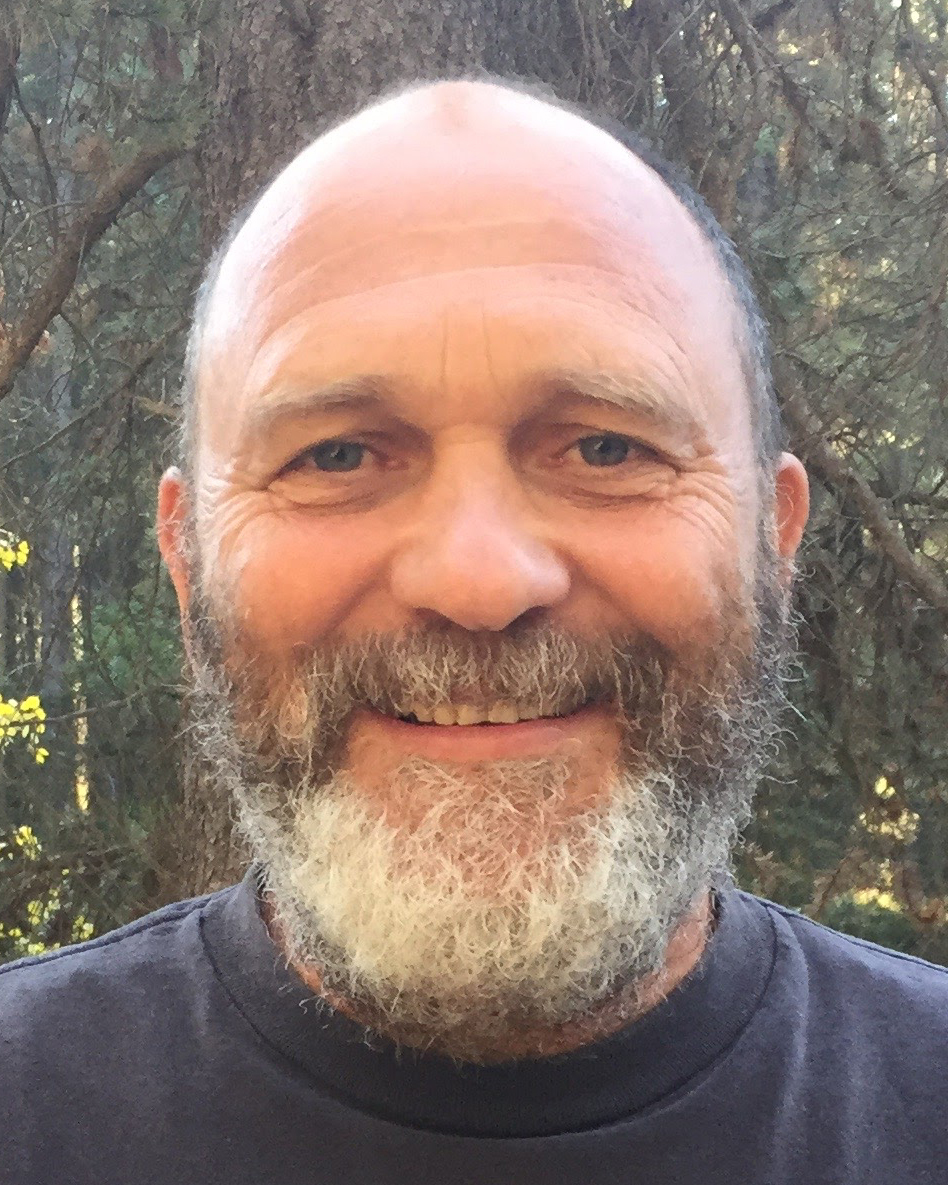}}]
{Omer Gurewitz} received the B.Sc. degree in Physics from Ben Gurion University, Beer Sheva, Israel, in 1991, and the M.Sc. and Ph.D. degrees in Electrical Engineering from the Technion Israel Institute of Technology,
Haifa, Israel, in 2000 and 2005, respectively. He is an Assistant Professor with the Department of Communication Systems Engineering, Ben Gurion University. Between 2005 and 2007, he was a Post-doctoral Researcher with the Electrical and Computer Engineering (ECE) Department, Rice University, Houston, TX, USA. His research interests are in the field of performance evaluation of wired and wireless communication networks. His current projects include cross-layer design and implementation of medium access protocols for next generation wireless communication.
\end{IEEEbiography}

\end{document}